\newtheorem{definition}{Definition}
\newtheorem{problem}{Problem}
\newtheorem{lemma}{Lemma}
\newtheorem{corollary}{Corollary}
\newtheorem{assumption}{Assumption}
\newtheorem{result}{Result}
\newcommand{\REMOVED}[1]{}
\newcommand{\eq}[1]{Eq.~\eqref{#1}}
\newcommand{\eqs}[1]{Eqs.~\eqref{#1}}
\newcommand{\redout}{\bgroup\markoverwith{\textcolor{red}{\rule[.3ex]{2pt}{1pt}}}\ULon}
\newcommand{\pushright}[1]{\ifmeasuring@#1\else\omit\hfill$\displaystyle#1$\fi\ignorespaces}
\newcommand{\pushleft}[1]{\ifmeasuring@#1\else\omit$\displaystyle#1$\hfill\fi\ignorespaces}
\newcommand\numberthis{\addtocounter{equation}{1}\tag{\theequation}}
\begin{document}
\title{Offloading on the Edge: Analysis and Optimization of Local Data Storage and Offloading in HetNets}
 \author{
 \IEEEauthorblockN{Pavlos Sermpezis}
 \IEEEauthorblockA{Mobile Communications 
 Dept.\\EURECOM, France\\pavlos.sermpezis@eurecom.fr}
\and
 \IEEEauthorblockN{Luigi Vigneri}
 \IEEEauthorblockA{Mobile Communications Dept.\\EURECOM, France\\luigi.vigneri@eurecom.fr}
 \and
 \IEEEauthorblockN{Thrasyvoulos Spyropoulos}
 \IEEEauthorblockA{Mobile Communications Dept.\\EURECOM, France\\thrasyvoulos.spyropoulos@eurecom.fr}
 }
\maketitle

\begin{abstract}
The rapid increase in data traffic demand has overloaded existing cellular networks. Planned upgrades in the communication architecture (e.g. LTE), while helpful, are not expected to suffice to keep up with demand. As a result, extensive densification through small cells, caching content closer to or even at the device, and device-to-device (D2D) communications are seen as necessary components for future heterogeneous cellular networks to withstand the data crunch. Nevertheless, these options imply new CAPEX and OPEX costs, extensive backhaul support, contract plan incentives for D2D, and a number of interesting tradeoffs arise for the operator. 
In this paper, we propose an analytical model to explore how much local storage and communication through ``edge'' nodes could help offload traffic in various heterogeneous network (HetNet) setups and levels of user tolerance to delays. We then use this model to optimize the storage allocation and access mode of different contents as a tradeoff between user satisfaction and cost to the operator. Finally, we validate our findings through realistic simulations and show that considerable amounts of traffic can be offloaded even under moderate densification levels.
\end{abstract}

\section{Introduction}
The growth in the number of ``smart'' mobile devices and connection speeds has led to a high volume of mobile data traffic. Cellular networks are currently overloaded and, despite a lot of planned improvements on the physical layer technologies, they are not expected to be able to keep up with the rapidly increasing user data demand~\cite{cisco2014}. Radically reducing the communication distance by deploying, and \emph{offloading} traffic to, many ``small cells'' (e.g. femto, pico, or even WiFi) is seen as the only viable solution~\cite{femtocell-survey,HetNets-paradigm,Offloading-Wifi}. Nevertheless, this requires a large investment in upgrading the backhaul network, increasingly based on wireless links, which will often be the new performance bottleneck~\cite{femtocaching}. Caching popular content at the ``edge'', i.e. on storage devices installed at small cell base stations could alleviate backhaul congestion~\cite{femtocaching,poularakis-video-HetNets}, and is supported by a number of real data studies suggesting a high amount of demand overlap between user requests~\cite{youtube-traffic-from-edge,top-video-cellular,pptv-mobile-vod}.

Reducing the communication distance is taken yet a step further with the newly proposed paradigm of device-to-device (D2D) communication~\cite{survey-d2d,caire-d2d-caching-limits}. A device can store a (popular) content after consuming it, and give it directly to other neighboring devices also interested in it, offloading these requests from the main network. The connection between the two devices could be in-band (cellular frequencies) or out of band (e.g. Bluetooth, WiFi Direct). While D2D-based offloading normally assumes a content request will either be served \emph{immediately} from a device currently in range or the cellular network, some recent works have suggested the use of \emph{opportunistic offloading} through D2D: a device requesting some content might wait for some amount of time until it \emph{encounters} another device sharing the content~\cite{multiple-offloading,fluid-limit-mass2012,offloading-control-theory}, and go back to the main network if not found before some set deadline. 

Hence, more data could be offloaded from the main network through such D2D communication, perhaps at the expense of increased delay for some requests. Such increased delays could sometimes be acceptable (e.g. asynchronous requests, longer start-up or buffering delays easily amortized when considering large content). Yet, in many cases, the operator will need to provide appropriate incentives to these users, either in the form of instantaneous price reductions~\cite{tube} or low(er) priced plans. 
What is more, operators will probably need to also provide incentives to the devices storing the content and acting as local \emph{relays} on their behalf, as this raises important battery consumption, storage, as well as privacy and security issues. 

The provision of these incentives constitutes another important form of cost for the operator, together with the costs of directly serving the content from the main (mostly macro-cell based) network, and that of installing, maintaining, and supporting with ample backhaul capacity, new small cells with large enough caches. It thus becomes increasingly important for an operator of such a future Heterogeneous Network (HetNet) with caching and D2D capabilities to be able to answer questions like: \emph{"How much content can be offloaded by a given setup as a function of content demand patterns?"}, \emph{"Is it worth investing in additional cell densification, or would it be more cost-efficient to provide incentives for D2D opportunistic offloading?"}.   

To this end, in this paper we propose an analytical model that can be used to study the problem of "offloading on the edge" in a HetNet. Although capturing all the fine details of possible setups and technologies would be a rather daunting task, we assume two main mechanisms being employed in the considered network, namely (i) caching on small cells and mobile devices, collectively referred to as "edge nodes", and (ii) offloading requests through local, short range communications (e.g. D2D or low power communication to local femto or pico base stations).  We describe the "offloading on the edge" mechanism and propose a generic model that allows us to analytically study it (Section~\ref{sec:system-model}). We proceed by deriving useful results for the performance of content delivery through this mechanism and the incurred costs, as a function of key system parameters (Section~\ref{sec:analysis-single-content}). Then, we study the total offloading cost and provide insights for content placement and dissemination strategies that minimize this cost (Section~\ref{sec:single-cost-optimization}). Finally, we validate our results through realistic simulations (Section~\ref{sec:single-validation}) and discuss related work (Section~\ref{sec:related}).

Summarizing, the main contributions of our work are:
\begin{itemize}
\item To our best knowledge, this is the first work jointly and analytically studying offloading through small cells, opportunistic D2D, and caching at both.   
\item We provide closed-form analytical approximations applicable to a number of performance metrics and network setups.
\item We provide initial insights into the various design tradeoffs involved, as well as the efficient allocation of storage space among different contents. 
\end{itemize}

\section{Offloading on the Edge}\label{sec:system-model}
\subsection{Network Setup}
We consider a Heterogeneous Cellular Network (HetNet)~\cite{HetNets-paradigm}, composed of 3 sets of nodes: 

\textit{Macro-cell Base Stations} ($\mathcal{BS}$): They provide full coverage to subscribed mobile nodes (MNs), but we assume their radio resources are congested. 

\textit{Small Cells} ($\mathcal{SC}$): These are shorter range, low power base stations (e.g. femto and pico-cells, or even WiFi access points) dispersed in the area of coverage. They provide ample capacity to the few MNs within range, and their communication cost to/from a MN is smaller~\cite{johansson2007cost}. Hence, they can be used to offload some traffic from BSs. However, the backhaul connection for these cells will often be wireless (either to a BS or to an aggregation point) and underprovisioned~\cite{femtocaching}. This makes a backhaul transmission to a small cell costly. To this end, each small cell is equipped with some storage capacity, as in~\cite{femtocaching,poularakis-video-HetNets}, where (popular) content could be cached to avoid duplicate backhaul accesses.

\textit{Mobile Nodes} ($\mathcal{MN}$): These include smartphones, tablets, netbooks, etc. MNs can communicate with BSs, SCs (if in range), and even other MNs directly, if D2D communication is allowed. D2D communication potentially offers higher rates at lower interference levels~\cite{survey-d2d}. Yet, appropriate incentives from the operator might be needed.  Without loss of generality, we assume out-of-band communication (e.g. WiFi Direct or Bluetooth) for D2D. We also assume that each MN also has some storage capacity (normally less than that of a small cell) for caching (popular) content. 

The number of nodes in each set is
\begin{equation*}
N_{BS} = |\mathcal{BS}|~~,~~N_{SC} = |\mathcal{SC}|~~,~~N_{MN} = |\mathcal{MN}|
\end{equation*}
where $|\cdot|$ denotes the cardinality of a set.

\subsection{Offloading Mechanism}\label{sec:system-communication}

\textbf{\textit{Content Requests.}} We assume that each MN is interested in different contents over time (e.g. videos, web pages, software updates, etc.), and that the same content may be of interest to multiple MNs. This interest overlap is supported by recent studies (e.g.~\cite{youtube-traffic-from-edge,top-video-cellular,pptv-mobile-vod}, to name a few), where the popularity distribution of contents is shown to be highly skewed. In the remainder, we will be assuming that the number of nodes interested in a content, the content popularity, is known in advance or can be estimated. For a number of applications, like \textit{push services}~\cite{fluid-limit-mass2012}
, this information can be known in advance by the cellular network. Users are subscribed to a push service they are interested in (e.g. news, series episodes, trending videos, etc.), and when a content (of this service) is created or published, the content provider starts distributing (\textit{pushing}) it to them\footnote{We assume that the content provider may be the cellular network operator itself or in cooperation with it (like the Akamai and Swisscom example~\cite{akamai-swisscom}).}.  Similarly, users might subscribe to certain categories of contents, such as personalized Internet radio stations like Pandora and Jango\footnote{\url{www.pandora.com} , \url{www.jango.com}}. The content of these pseudo-random streams of songs can be decided in advance, and thus the popularity of songs belonging to different streams can be estimated. 

\textbf{\textit{Content Delivery.}} An operator can deliver a content to an interested MNs in one of the following ways: (i) \textit{Direct transmission} from a BS; (ii) \textit{Offloading through SCs and/or MNs}, where the operator transmits the content to some SCs over the backhaul and stores it there, or instructs some MNs to store a content for some time (e.g. keeping in their cache a content they consumed). Then, the operator can ask an interested MN within range of a SC or MN caching that content to retrieve it directly.

Moreover, an operator can ask an MN interested in a content $\theta$, but not \emph{currently} within range of an SC or MN with content $\theta$ in its cache, to wait for an amount of time, let $TTL$, until it \emph{moves} within range of such an SC or MN. If this time expires, then the operator is obliged to deliver the content directly through the closest macro BS. While this \textit{delay-tolerant} approach is in contrast to the usual ones considered for small cell and D2D based offloading~\cite{femtocaching,caire-d2d-caching-limits,poularakis-video-HetNets}, it is likely that the small cell and (D2D enabled) mobile node density will not always be enough to offload enough traffic. Hence, it is a valuable (and complementary) alternative, with potential benefits (increased offloading) and costs (reduced QoE and potential monetary incentives)\footnote{Clearly, such delays might not be acceptable for all applications. However, many applications are inherently delay-tolerant, e.g. software updates, file downloads, one way streaming (e.g. YouTube or Netflix). Moreover, users might be willing to accept small or larger delays, if appropriate incentives are provided, and delayed content delivery has already been considered in a number of contexts, e.g~\cite{tube,push-to-peer}
.}.

\subsection{Cost Model}\label{sec:cost-model}
The goal of an offloading mechanism is to minimize the cost of delivering a set of contents over time to different nodes. Hence, we need first to define a model for the costs involved in each phase of the "offloading on the edge" mechanism.\\
\noindent \textbf{\textit{$-$ Initial Placement Costs: $C_{BH}$, $C_{BS}$.}}\\
The content provider, at time $t_{0}=0$, places the content to some edge nodes (SCs and/or MNs). A content is placed to a SC through a backhaul (wired or wireless) transmission, and we denote this per placement cost as $C_{BH}$. A (possible) content placement to some MNs takes place through a macro-cell BS transmission. We denote this transmission cost, which mainly depends on the load/congestion of the BSs, as $C_{BS}$.

\noindent \textbf{\textit{$-$ Opportunistic Offloading Costs:  $C_{SC}$, $C_{D2D}$.}}\\
During time $t\in(0,TTL]$, the holders (which are either SCs or MNs) deliver the content to any requester they meet. We consider different costs for a SC-MN and a MN-MN (or D2D) transmission: $C_{SC}$ and $C_{D2D}$. The former cost depends on the operating cost (transmission, energy consumption) of an SC, whereas the latter might exist if a compensation (or reward) is given by operator to MNs for each content they offload.

\noindent \textbf{\textit{$-$ Delayed Delivery Cost: $ C_{BS}^{(TTL)}$.}}\\
At time $TTL$, the cellular network sends through macro-cell BSs the content to every non-served requester. This cost relates both to the load of BS (as $C_{BS}$) and to a (possible) compensation to the MNs for a delayed delivery. We denote this (per transmission) cost as $C_{BS}^{(TTL)}$.

\subsection{Content Dissemination Model and Assumptions}\label{sec:analysis-preliminaries}
Let us assume a content item (e.g. a popular video file) and a set of MNs interested in it. The content provider, at time $t_{0}=0$, places the content to the caches of some SCs and/or MNs. If by an expiry time $TTL$ (if any), some of the interested MNs have not met any edge node (SC or MN) with the content, they are served by a macro-cell BS\footnote{In the mechanism we consider, the content is cached only at the initial time, $t_{0}=0$, and macro-cell BSs deliver it only at its expiry time , $t=TTL$. Although one could place contents during time $t\in(0,TTL)$ as well, it has been shown (for similar settings) that placing contents at times $t\in(0,TTL)$ leads to a sub-optimal performance~\cite{offloading-control-theory,fluid-limit-mass2012}.}.

For the ease of reference, we define the following sets of "edge nodes" that are involved in the offloading process: 
\begin{definition}
A \emph{requester} of a content is a mobile node (MN) that (a) is interested in the content and (b) has not received it yet. We denote the set of requesters at time $t$ as $\mathcal{R}(t)$.
\end{definition}
\begin{definition}
A \emph{holder} of a content is an edge node (SC or MN) that stores the content and will forward it to its requesters. We denote the set of holders at time $t$ as $\mathcal{H}(t)$.
\end{definition}
\noindent We further denote the number of requesters and holders as:
\begin{equation*}
R(t) = |\mathcal{R}(t)|~~\text{and}~~H(t) = |\mathcal{H}(t)|
\end{equation*}
where $\mathcal{H}(t) = \mathcal{H}_{SC}(t)\cup\mathcal{H}_{MN}(t)$ and $H(t) = H_{SC}(t)+H_{MN}(t)$

To model the level of participation of MNs in the offloading mechanism, we make the following assumption.
\begin{assumption}[Cooperation]\label{ass:cooperation} A requester acts as a holder for a content it has received with probability $p_{c}\in[0,1]$. The probability $p_{c}$ is equal among all nodes and contents.
\end{assumption}

The probability $p_{c}$ captures either the chance a node to forward the content (e.g. it has enough resources at the time) or the percentage of nodes who are "contracted" to help\footnote{ Here, we need to stress that the above assumption implies that MNs will never become holders of a content they are not interested in. Although there exist studies that assume that even not interested MNs might be willing to act as holders~\cite{offloading-wowmom11,offloading-control-theory,fluid-limit-mass2012,offloading-double-opportunities}, we believe that incentive mechanisms for these cases are difficult to implement (e.g. a user easier accepts to forward a content it already has stored, than to retrieve, cache and forward a content it will never use). Nevertheless, our framework could be easily extended also for such cases.}.

Finally, since edge nodes can exchange data only when they come within transmission range, the offloading is heavily affected by these \textit{meeting events} between nodes. We assume the following class of node mobility.
\begin{assumption}[Mobility]\label{ass:heterogeneous-mobility}~\\
$-$ The meeting events between two nodes $\{i,j\}$, $i\in\mathcal{MN}$ and $j\in\mathcal{MN}\cup\mathcal{SC}$, are given by a Poisson process with rate $\lambda_{ij}$.\\
$-$ The meeting rates $\lambda_{ij}$ are drawn from an (arbitrary) probability distribution $f_{\lambda}(\lambda)$ with mean value $\mu_{\lambda}$.\\
$-$ Meeting duration is negligible compared to the time intervals between nodes, but long enough for a content exchange.
\end{assumption}
Assumption~\ref{ass:heterogeneous-mobility} is a tradeoff between realism (heterogeneous $\lambda_{ij}$) and tractability (Poisson process). Heterogeneous meeting rates are motivated by analysis of real mobility traces~\cite{Gao2009,Conan2007}, where not all people meet each other with the same frequency, and by the different communication ranges (SC-MN and MN-MN). Similar assumptions are common in related works~\cite{Gao-user-centric-DTN,multiple-offloading,offloading-double-opportunities,fluid-limit-mass2012,offloading-control-theory} and have been shown to not be far from real mobility~\cite{Gao2009,Conan2007}. Yet, in Section~\ref{sec:single-validation}, we test our results against realistic scenarios where node mobility departs from our assumptions and involves much more complexity.

\section{Analysis}\label{sec:analysis-single-content}
An operator, in order to optimize the offloading performance and cost, has to weigh its options and take decisions about: \textit{how to deliver} each content (directly or through offloading), \textit{how many copies} of a content should be placed to different edge nodes, \textit{which contents to store} in the SC and/or MN caches when their capacity is limited, etc. To this end, in this section, we provide the analytical results that are needed when trying to answer these questions. Specifically, we provide simple, closed form expressions for the performance of the "offloading on the edge" mechanism (Section~\ref{sec:single-generic}), and the costs it incurs (Section~\ref{sec:analysis-cost}).

\subsection{Content Dissemination}\label{sec:single-generic}
The performance  of the ``offloading on the edge'' mechanism depends on how much traffic it can offload and/or how fast are contents delivered. To answer these questions, we calculate the two main (and most common) performance metrics, namely the \textit{content delivery probability}, and \textit{content delivery delay}.

First, we state the following Lemma, in which we use a mean field approximation and a resulting system of ODEs to approximate the number of holders and requesters over time.

\begin{lemma}~\label{thm:ODEs}
The fluid-limit deterministic approximation for the expected number of holders ($H(t)$) and requesters ($R(t)$) at time $t$, is
\begin{align*}
H(t) &{=}  H_{0}\cdot \frac{(p_{c}\cdot R_{0}+H_{0})\cdot e^{\mu_{\lambda}\cdot(p_{c}\cdot R_{0}+H_{0})\cdot t }}{p_{c}\cdot R_{0}+H_{0}\cdot e^{\mu_{\lambda}\cdot(p_{c}\cdot R_{0}+H_{0})\cdot t }}\\
R(t) &{=} R_{0}\cdot \frac{p_{c}\cdot R_{0}+H_{0}}{p_{c}\cdot R_{0}+H_{0}\cdot e^{\mu_{\lambda}\cdot(p_{c}\cdot R_{0}+H_{0})\cdot t }} 
\end{align*}
where $H_{0} = H(0^{+})$ and $R_{0} = R(0^{+})$.
\end{lemma}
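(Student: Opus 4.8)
The plan is to obtain both expressions as the solution of a mean-field (fluid) ODE system for the jump process $\big(R(t),H(t)\big)$. First I would describe this process precisely: by Assumptions~\ref{ass:cooperation} and~\ref{ass:heterogeneous-mobility}, the state changes only when some requester $i\in\mathcal{R}(t)$ meets some holder $j\in\mathcal{H}(t)$; each such pair triggers a content transfer at rate $\lambda_{ij}$, and the superposition of these independent Poisson clocks makes the total transfer rate in state $(\mathcal{R},\mathcal{H})$ equal to $\sum_{i\in\mathcal{R}}\sum_{j\in\mathcal{H}}\lambda_{ij}$. At each transfer, $R$ drops by one (the requester is served) and $H$ rises by one with probability $p_{c}$, and is unchanged otherwise, so that SC holders and existing MN holders persist and a served MN becomes a holder only if it cooperates.

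The key step — exactly the ``mean field'' approximation referred to in the statement — is twofold: I would (i) replace the heterogeneous rates $\lambda_{ij}$ by their common mean $\mu_{\lambda}$, so that the aggregate transfer rate collapses to $\mu_{\lambda}\,R(t)H(t)$, and (ii) treat $R(t)$ and $H(t)$ as deterministic real-valued quantities, replacing the integer Bernoulli increment of $H$ by its expectation $p_{c}$. This is the standard density-dependent, Kurtz-type passage used for SIR-type epidemic processes, and it yields the system
\begin{align*}
\dot R(t) &= -\,\mu_{\lambda}\, R(t)\,H(t),\\
\dot H(t) &= p_{c}\,\mu_{\lambda}\, R(t)\,H(t),
\end{align*}
with $R(0^{+})=R_{0}$ and $H(0^{+})=H_{0}$.

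Solving is then routine. Adding $p_{c}$ times the first line to the second shows that $H(t)+p_{c}R(t)$ is conserved, so $H(t)=K-p_{c}R(t)$ with $K:=p_{c}R_{0}+H_{0}$. Substituting back, $R$ obeys the scalar Bernoulli equation $\dot R=-\mu_{\lambda}K\,R+\mu_{\lambda}p_{c}\,R^{2}$, which under the change of variable $u=1/R$ becomes the linear first-order ODE $\dot u-\mu_{\lambda}K\,u=-\mu_{\lambda}p_{c}$; integrating with $u(0)=1/R_{0}$ and inverting gives the stated formula for $R(t)$, and $H(t)=K-p_{c}R(t)$ then gives the stated formula for $H(t)$ after collecting terms. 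As a sanity check, $R(t)\to 0$ and $H(t)\to K=p_{c}R_{0}+H_{0}$ as $t\to\infty$, which is the expected eventual state.

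The only genuinely non-trivial point is the mean-field passage itself: it is an approximation, exact only in an appropriate large-population regime in which the empirical pair-meeting rates concentrate around $\mu_{\lambda}$ and the stochastic fluctuations of $(R,H)$ vanish. I would state this honestly rather than attempt a full convergence proof, and point forward to Section~\ref{sec:single-validation}, where the resulting closed forms are tested against realistic, spatially correlated mobility for which these assumptions do not hold exactly. A minor related caveat to flag is that a single $\mu_{\lambda}$ is being used for both SC--MN and MN--MN contacts, which is already folded into $f_{\lambda}(\lambda)$ by Assumption~\ref{ass:heterogeneous-mobility}.
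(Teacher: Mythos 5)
Your proposal is correct and follows essentially the same route as the paper: model the dissemination as a Markov jump process, pass to the mean-field ODE system $\dot r=-\mu_{\lambda}rh$, $\dot h=p_{c}\mu_{\lambda}rh$, and solve. The only difference is that you spell out the integration step (conservation of $h+p_{c}r$, reduction to a Bernoulli equation via $u=1/r$) which the paper leaves implicit, and your honest caveat about the mean-field passage matches the paper's own discussion of when the approximation is accurate.
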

\begin{proof} Having assumed Poisson meeting processes, we can model the dissemination of a content with a continuous Markov Chain, whose states correspond to the different sets of holders and requesters $\{\mathcal{H},\mathcal{R}\}$. Fig.~\ref{fig:markov-chain} shows a segment of this Markov Chain; we present the different states with equal number of holders ($|\mathcal{H}|$) and requesters ($|\mathcal{R}|$) under the same group, which can be described by the tuples $\{|\mathcal{H}|,|\mathcal{R}|\}$. To transition between states a \textit{content delivery}, which takes place when a holder $i\in\mathcal{H}$ and a requester $j\in\mathcal{R}$ meet, is needed:
(i) \textit{Content delivery to cooperative node.} The next state is $\{|\mathcal{H}|=m+1,|\mathcal{R}|=n-1\}$ and the transition rate
\begin{equation}\label{eq:transition-rate-pc-generic}
 \lambda_{(m,n)\rightarrow(m+1,n-1)} = p_{c}\cdot \textstyle\sum_{i\in\mathcal{H}}\sum_{j\in\mathcal{R}}\lambda_{ij}
\end{equation}
(ii) \textit{Content delivery to non-cooperative node.} The next state is $\{|\mathcal{H}|=m,|\mathcal{R}|=n-1\}$ and the transition rate
\begin{equation}\label{eq:transition-rate-1-pc-generic}
 \lambda_{(m,n)\rightarrow(m,n-1)} = (1-p_{c})\cdot \textstyle\sum_{i\in\mathcal{H}}\sum_{j\in\mathcal{R}}\lambda_{ij}
\end{equation}

\begin{figure}
\includegraphics[width=1\linewidth, height=0.5\linewidth]{./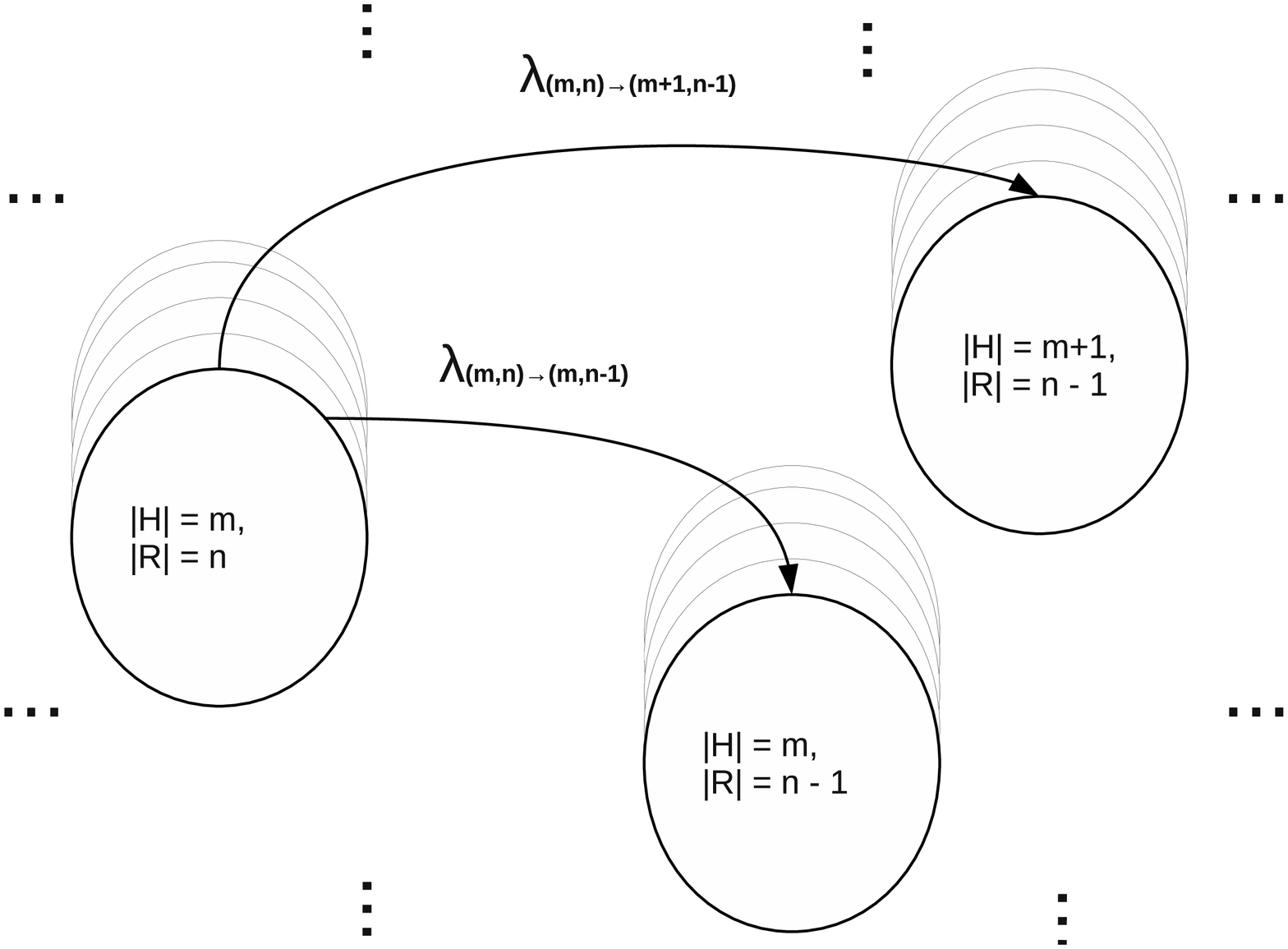}
\caption{Content dissemination modeled by a Markov Chain.}
\label{fig:markov-chain}
\end{figure}


Statistics for the content dissemination process over time (e.g. distribution of $|\mathcal{H}(t)|$ or $|\mathcal{R}(t)|$), can be computed using the transition matrix of the Markov Chain of Fig.~\ref{fig:markov-chain}. However, this would render the problem analytically (and numerically, for large networks) intractable. To this end, we approach the problem with a mean field approximation of stochastic reaction models~\cite{mean-field-toolbox}.

We first form the \textit{drift equation}~\cite[Theorem~1.4.1]{mean-field-toolbox} for the expected number of holders, $E\left[|\mathcal{H}(t)|\right]\equiv E\left[H(t)\right]$, as:
 \begin{align*}
 \frac{dE\left[H(t)\right]}{dt} &= E\left[\lambda_{(m,n)\rightarrow(m+1,n-1)}\right] = p_{c}\cdot E\left[\sum_{i\in\mathcal{H}}\sum_{j\in\mathcal{R}}\lambda_{ij}\right]
\end{align*}
The expectation in the right side of the drift equation is difficult to compute, as it requires the computation of the probabilities over the whole state space $\{\mathcal{H},\mathcal{R}\}$. To this end, one can approximate $E[H(t)]$ with its deterministic equivalent $h(t)$. This approximation comes after neglecting the variability of $H(t)$ around its mean value and becomes more accurate for larger systems~\cite[Section~1.5]{mean-field-toolbox}.

Based on the deterministic approximation and since (a) the rates $\lambda_{ij}$ are drawn independently from a distribution $f_{\lambda}(\lambda)$ with mean value $\mu_{\lambda}$ ($E[\lambda_{ij}]=\mu_{\lambda}$), and (b) the sum $\sum_{i\in\mathcal{H}}\sum_{j\in\mathcal{R}}\lambda_{ij}$ consists of $|\mathcal{H}|\cdot |\mathcal{R}|$ terms, we can write
\begin{equation}\label{eq:mean-value-approximation}
 \textstyle E\left[\sum_{i\in\mathcal{H}}\sum_{j\in\mathcal{R}}\lambda_{ij}\right] \approx h(t)\cdot r(t)\cdot \mu_{\lambda}
\end{equation}
The higher the number of terms in the above sum, and the less the heterogeneity of the meeting rates (i.e. the variance of $f_{\lambda}(\lambda)$), the more accurate the approximation in \eq{eq:mean-value-approximation} is.

Substituting \eq{eq:mean-value-approximation} in the drift equation (where $H(t)\rightarrow h(t)$), gives the ordinary differential equation (ODE) for $h(t)$\footnote{Note the differences between $H(t)$ and $h(t)$: (a) $H(t)$ is integer, whereas $h(t)$ is a real number; (b) the drift equation for $H(t)$ contains expectations, while the respective ODE for $h(t)$ does not.}
\begin{equation}\label{eq:ODE-h(t)}
 \frac{dh(t)}{dt} = p_{c}\cdot h(t)\cdot r(t)\cdot \mu_{\lambda}
\end{equation}

Proceeding similarly, the ODE for the deterministic approximation of the number of requesters ($R(t)\rightarrow r(t)$), is 
\begin{equation}\label{eq:ODE-r(t)}
 \frac{dr(t)}{dt} = -h(t)\cdot r(t)\cdot \mu_{\lambda}
\end{equation}

Finally, solving the system of the ODEs of \eq{eq:ODE-h(t)} and \eq{eq:ODE-r(t)}, gives the expressions of Lemma~\ref{thm:ODEs}.
\end{proof}

Based on Lemma~\ref{thm:ODEs} we, now, proceed to the calculation of the performance metrics. Let us consider a requester $i\in\mathcal{R}(0^{+})$, and denote as $T_{i}$ the time it receives the content. The probability this (random) requester to receive the content by a time $t$, i.e. $P\{T_{i}\leq t\}$, is equal to the \textit{percentage of offloaded contents} by time $t$. Hence, we can write
\begin{equation}\label{eq:definition-delivery-prob}
 P\{T_{i}\leq t\} = \frac{R_{0}-R(t)}{R_{0}} = 1-\frac{R(t)}{R_{0}}
\end{equation}
Substituting the expression of Lemma~\ref{thm:ODEs} in \eq{eq:definition-delivery-prob}, gives the following Result for the content delivery probability
%
%
%

\begin{result}[\textbf{Delivery Probability}]\label{result:delivery-probability-single}
The probability a content to be delivered to a requester by time $t$ is given by
\begin{equation*}
 P\{T_{d}\leq t\} =  1-\frac{p_{c}\cdot R_{0}+H_{0}}{p_{c}\cdot R_{0}+H_{0}\cdot e^{\mu_{\lambda}\cdot (p_{c}\cdot R_{0}+H_{0}) \cdot t}}
\end{equation*}
where $H_{0} = H(0^{+})$ and $R_{0} = R(0^{+})$.
\end{result}

With respect to the average delay a requester experiences till it receives the content, we state the following Result (the proof is technical and can be found in Appendix~\ref{appendix:proof-delay}). We derive expressions for two cases: (a) the content does not expire (i.e. $TTL\rightarrow\infty$), and (b) a macro-cell BS serves undelivered contents at time $t=TTL$. .
\begin{result}[\textbf{Delivery Delay}]\label{result:expected-delay-single}
The expected content delivery delay, under an expiry time $TTL\in[0,\infty)$, is given by\\
$-$ for $p_{c}>0$:
\begin{flalign*}
E[T_{d}|TTL] &=  \frac{\ln\left(1+\displaystyle\frac{p_{c}\cdot R_{0}-e^{-\mu_{\lambda}\cdot (p_{c}\cdot R_{0}+H_{0})\cdot TTL}}{H_{0}+p_{c}\cdot R_{0}\cdot e^{-\mu_{\lambda}\cdot (p_{c}\cdot R_{0}+H_{0})\cdot TTL}}\right)}{\mu_{\lambda}\cdot p_{c}\cdot R_{0}}&
\end{flalign*}
$-$ for $p_{c}=0$:
\begin{flalign*}
E[T_{d}|TTL] &=\frac{1-e^{-\mu_{\lambda}\cdot H_{0}\cdot TTL}}{\mu_{\lambda}\cdot H_{0}}&
\end{flalign*}
where $H_{0} = H(0^{+})$ and $R_{0} = R(0^{+})$.

%
\end{result}


\subsection{Content Delivery Cost}\label{sec:analysis-cost}
Incorporating the offloading costs (Section~\ref{sec:cost-model}) in our content dissemination model, and using the analytical results of Section~\ref{sec:single-generic}, we calculate the cost of a single content delivery in Result~\ref{thm:lemma-single-cost}. The expression we derive, gives the cost as a (simple) function of the system parameters (e.g. $R_{0}$, $\mu_{\lambda}$) and the operator selected parameters (e.g. $H_{SC}(0)$, $H_{MN}(0)$), providing, thus, the necessary information for the evaluation and tuning of the ``offloading on the edge'' mechanism.

\begin{result}\label{thm:lemma-single-cost}
The cost of ``offloading on the edge'' a content is given by
\begin{align*}
C =& C_{BH}\cdot H_{SC}(0)+ C_{BS}\cdot H_{MN}(0)\\
	&+ \left(C_{SC}\cdot q+ C_{D2D}\cdot(1-q) \right)\cdot  R_{0}	\cdot P\{T_{d}\leq TTL\}\\
	&+ C_{BS}^{(TTL)}\cdot R_{0} \cdot \left(1-P\{T_{d}\leq TTL\}\right)
\end{align*}
where $q = \textstyle \frac{H_{SC}(0)\cdot \ln\left(\frac{H(TTL)}{H_{0}}\right)}{p_{c}\cdot\left(R_{0}-R(TTL)\right)}$, and $P\{T_{d}\leq~TTL\}$, $H(TTL)$ and $R(TTL)$ are given in Lemma~\ref{thm:ODEs} and Result~\ref{result:delivery-probability-single}.
\end{result}
\begin{proof}~\\
$-$ {\textit{Initial Placement.}} The first two terms correspond to the initial placement phase: The cellular network operator, at time $t=0$, places the content to $H_{SC}(0)$ SCs and $H_{MN}(0)$ MNs; in total ($H_{0} = H_{SC}(0)+H_{MN}(0)$) holders. The costs per content placement are $C_{BH}$ and $C_{BS}$, respectively.\\

\noindent$-$ {\textit{Opportunistic Offloading.}} During the opportunistic offloading phase, i.e. $t\in(0,TTL)$, the average number of requesters that receive the content by an edge node is $R_{0}\cdot P\{T_{d}\leq TTL\}$. If we denote with $q$ the percentage of requesters that receive the content by a SC, it is easy to see that the costs due to SC-MN and MN-MN content deliveries are
\begin{align} 
C_{SC}\cdot q\cdot  R_{0}\cdot P\{T_{d}\leq TTL\}\label{eq:cost-SC}\\
C_{D2D}\cdot(1-q)\cdot  R_{0}\cdot P\{T_{d}\leq TTL\}\label{eq:cost-D2D}
\end{align}
respectively. 

To calculate the percentage $q$ we proceed as following:

At first, the total number of requesters that receive the content by time $TTL$ is
\begin{equation}\label{eq:tot-requesters-ttl}
\#R_{tot} = R_{0}-R(t)
\end{equation}

Second, the total number of requesters that receive the content in the interval $(t,t+dt]$, $t\in(0,TTL)$ is
\begin{equation}
 R(t)-R(t,t+dt) = -dR(t)
\end{equation}
The probability that a content delivery that takes place in the interval in the interval $(t,t+dt]$ is due to a SC is equal to
\begin{equation}
 \frac{H_{SC}(0)}{H(t)}\in[0,1]
\end{equation}
where $H_{SC}(0)$ is the number of SC holders (which does not change over time), and $H(t)$ the total number of holders at time $t$.

Therefore, the number of requesters that receive the content by an SC holder in the interval $(t,t+dt]$ is given by $-dR(t)\cdot \frac{H_{SC}(0)}{H(t)}$, and the total number of requesters that receive the content by an SC holder by time $TTL$ is
\begin{align}
\#R_{SC}
	&=\textstyle\int_{0}^{TTL} -dR(t)\cdot \frac{H_{SC}(0)}{H(t)} =  \int_{0}^{TTL} -\frac{dR(t)}{dt}\cdot \frac{H_{SC}(0)}{H(t)} \cdot dt\nonumber\\
 	&\stackrel{\text{\eq{eq:ODE-r(t)}}}{=} \textstyle\int_{0}^{TTL} H(t)\cdot R(t)\cdot \mu_{\lambda}\cdot \frac{H_{SC}(0)}{H(t)} \cdot dt\nonumber\\
 	&= \mu_{\lambda}\cdot H_{SC}(0)\textstyle\int_{0}^{TTL} R(t)\cdot dt
\end{align}
Using the expression of Lemma~\ref{thm:ODEs} for $R(t)$ to calculate the above integral, we get
\begin{align}\label{eq:sc-requesters-ttl}
\#R_{SC}
 	&= \frac{H_{SC}(0)}{p_{c}} \cdot \ln\left(\frac{(p_{c}\cdot R_{0}+H_{0})\cdot e^{\mu_{\lambda}\cdot(p_{c}\cdot R_{0}+H_{0})\cdot TTL }}{p_{c}\cdot R_{0}+H_{0}\cdot e^{\mu_{\lambda}\cdot(p_{c}\cdot R_{0}+H_{0})\cdot TTL }}\right)\nonumber\\ 	
 	&= \frac{H_{SC}(0)}{p_{c}} \cdot \ln\left(\frac{H(TTL)}{H_{0}}\right)
\end{align}
where the last equality follows from the expression for $H(t)$ given in Lemma~\ref{thm:ODEs}.

Now, $q$ easily follows from \eq{eq:tot-requesters-ttl} and \eq{eq:sc-requesters-ttl}
\begin{equation}
q = \frac{\#R_{SC}}{\#R_{tot}} = \frac{H_{SC}(0)}{p_{c}} \cdot \frac{\ln\left(\frac{H(TTL)}{H_{0}}\right)}{R_{0}-R(TTL)}
\end{equation}

\noindent$-$ {\textit{Delayed Delivery.}} Finally, the average number of requesters that do not receive the content before its expiry time, is given by $R_{0}\cdot \left(1-P\{T_{d}\leq TTL\}\right)$. Since, the cost of each content transmission at time $t=TTL$ is $C_{BS}^{(TTL)}$, the total cost of delayed delivery phase (last line of the expression in Lemma~\ref{thm:lemma-single-cost}) follows easily.
\end{proof}

\section{Applications: Cost Optimization}\label{sec:single-cost-optimization}
In a real scenario, the network operator would have to offload simultaneusly many different contents. Using the results of the previous section, the average performance or the total cost over all the contents can be calculated easily, by evaluating them for each content separately and then averaging or summing them, respectively. However, since some of the system parameters are controlled by the operator (e.g. $H_{0}$), they can be selected such that they lead to optimal performance. To this end, in this section, as an application of our analytical results, we study how offloading and caching can be designed in order to minimize the total cost.

\subsection{Optimizing the Total Offloading Cost}\label{sec:optimizing-total-cost}
Let us assume that the content provider has to deliver $M\geq1$ contents to their requesters. We denote the set of the contents as $\mathcal{M}$ ($M=|\mathcal{M}|$).  Since in a real scenario, not all contents are expected to be equally popular~\cite{youtube-traffic-from-edge,top-video-cellular,pptv-mobile-vod}, nor tolerate equal delays, we denote the popularity (i.e. the number of initial requesters) and the expiry time of each content $\theta\in\mathcal{M}$ as $R_{0}^{\theta}$ and $TTL^{\theta}$, respectively.

Under a given setting (i.e. with certain mobility, cooperation, traffic, etc., characteristics), what the cellular network can select, is the initial placement (\textit{caching}) for each content $\theta\in\mathcal{M}$; namely, the number of SC and MN initial holders, $H_{SC}^{\theta}(0)$ and $H_{MN}^{\theta}(0)$, respectively (note that $H_{0}^{\theta}(0)\equiv H_{SC}^{\theta}(0)+H_{MN}^{\theta}(0)$).

Therefore, if we denote as $C^{\theta}$ is the delivery cost of a content $\theta\in\mathcal{M}$ (which is given by Result~\ref{thm:lemma-single-cost}), we can express the \textit{total} cost optimization problem as
\begin{problem}\label{eq:optimization-problem-multi}~\\
$~\hspace{1.6cm}\min_{\overline{H}_{SC},~\overline{H}_{MN}}~\left\{\sum_{\theta\in\mathcal{M}} C^{\theta}\right\}$
\begin{align*}
~s.t.~~&\forall\theta\in\mathcal{M}:~  0\leq H_{SC}^{\theta}(0)\leq N_{SC}\\
&\hspace{1.65cm} 0\leq H_{MN}^{\theta}(0)\leq R^{\theta}(0)\\
&\text{and}\hspace{1cm }\sum_{\theta\in\mathcal{M}}H_{SC}^{\theta}(0)\leq \sum_{i\in\mathcal{SC}} Q(i)
\end{align*}
where $\overline{H_{SC}}$ and $\overline{H_{MN}}$ denote the vectors with components $H_{SC}^{\theta}(0)$ and $H_{MN}^{\theta}(0)$ ($\theta\in\mathcal{M}$), respectively, and $Q(i)$ is the caching capacity (in number of contents) of a SC node\footnote{Since MNs cache only contents in which they are interested in, we assume that their storage capacity is enough for all the contents of interest. Hence, storage capacity constraints for MNs are not considered in Problem~\ref{eq:optimization-problem-multi}.}~$i$.
\end{problem}

The costs $C^{\theta}$ in the objective function of Problem~\ref{eq:optimization-problem-multi} can be expressed as a function of the optimization variables (Result~\ref{thm:lemma-single-cost}). As a result, well known numerical methods can be employed to solve Problem~\ref{eq:optimization-problem-multi}. In the following numerical example, we calculate the optimal content placement by solving Problem~\ref{eq:optimization-problem-multi}, apply the ``offloading on the edge'' mechanism, and demonstrate how it can reduce the total content delivery cost.

\textbf{\textit{Numerical Example:}} We assume a network with $N_{SC}=4$ SCs, each of them having a storage capacity of $Q=100$ contents. Edge nodes meet each other with an average meeting rate $\mu_{\lambda} = 3.3\cdot 10^{-5}$ meetings/sec (equal to this of the real mobility trace~\cite{infocom-trace}). The cellular network has to deliver $M=100$ contents of equal size and expiry time $TTL$. Content popularity is given by a bounded Pareto distribution in the interval $R_{0}\in[10, 1000]$ with shape parameter $\alpha = 0.5$~\cite{youtube-traffic-from-edge} or $\alpha=1$. The (normalized) costs are\footnote{In general, the offloading costs incurred in each phase, might differ between areas, time periods and operators. Their absolute values are not available and/or are difficult to estimate. To this end, in this example, as well as in other numerical results, we use relative values inferred by some average values proposed in~\cite{johansson2007cost}.} 
\begin{align*}
 C_{BH} &= 0.8~~~~~&C_{SC}&=0.2~~~~~&C_{BS}^{(TTL)}=2\\
 C_{BS}&=1 ~~~~~&C_{D2D}&=0.1~~~~~&
\end{align*}

Fig.~\ref{fig:numerical-example-cost-generic} shows how the total content delivery cost decreases when the cellular network offloads contents on the edge nodes. Specifically, we present the \textit{Relative Cost Decrease}, \textit{RCD}, $\frac{C-C_{off}}{C}$, where $C$ and $C_{off}$ denote the total cost of delivering the contents without and with offloading, respectively. High values of the \textit{RCD} correspond to cases where the cost of ``offloading on the edge'' $C_{off}$ is small (compared to the cost without offloading $C$), whereas low values of the \textit{RCD} indicate that the gain due to offloading is not significant.

In Fig.~\ref{fig:numerical-example-cost-generic-vs-TTL} we present how \textit{RCD} changes for different values of delay tolerance, $TTL$, when the percentage of MNs contributing to offloading is $10\%$ ($p_{c}=0.1$). In Fig.~\ref{fig:numerical-example-cost-generic-vs-pc} we present the change of \textit{RCD} for different MN cooperation levels ($p_{c}$) and for a delay tolerance equal to $TTL=5min$. The two curves, in both figures, correspond to scenarios where content popularity is Pareto distributed with exponent $\alpha=0.5$ and $\alpha = 1$. Some important observations are the following: (i) Higher values $p_{c}$ or $TTL$ lead to lower total offloading costs, i.e. the more willing an MN is to offload contents or to tolerate delays, the more effective the ``offloading on the edge'' becomes. In particular, for the presented scenarios, the content delivery cost can decrease up to $80\%$ for large $TTL$ values (Fig.~\ref{fig:numerical-example-cost-generic-vs-TTL}), while for large $p_{c}$ values the offloading cost can be $40\%$ less than the cost without offloading (Fig.~\ref{fig:numerical-example-cost-generic-vs-pc}). (ii) The cost gains are larger for smaller Pareto exponents $\alpha$. This is due to the fact that (a) for the more skewed distributions, i.e. $\alpha=1$, there exist more unpopular contents than in the case of $\alpha=0.5$, and (b) ``offloading on the edge'' (and, in general, offloading mechanisms) is more efficient when distributing popular contents.

\begin{figure}
\subfigure[$p_{c}=0.1$]{\includegraphics[width=0.49\linewidth]{./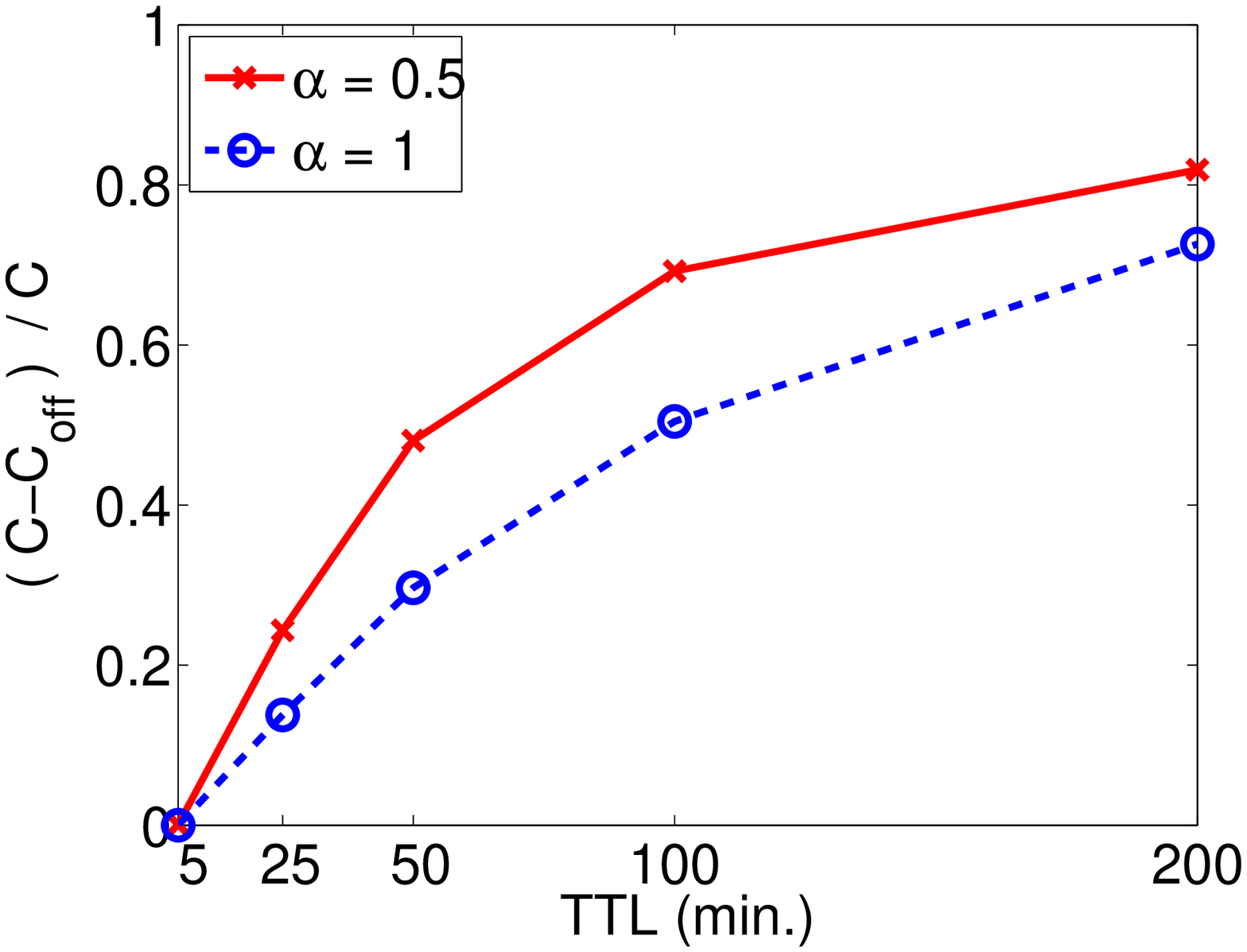}\label{fig:numerical-example-cost-generic-vs-TTL}}
\subfigure[$TTL = 5min.$]{\includegraphics[width=0.49\linewidth]{./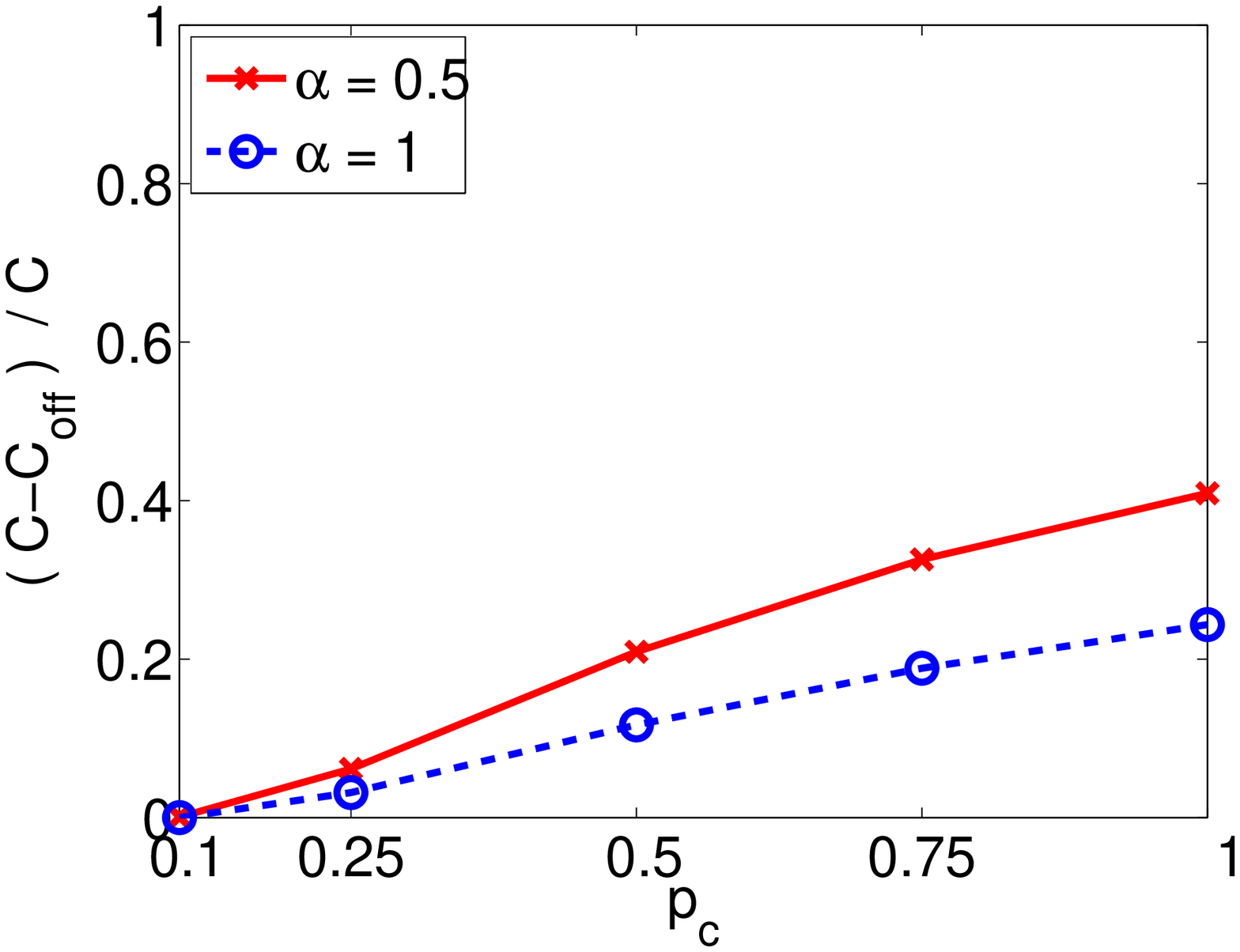}\label{fig:numerical-example-cost-generic-vs-pc}}
\caption{\textit{Relative Cost Decrease} $\frac{C-C_{off}}{C}$, where $C$ and $C_{off}$ denote the total cost of delivering the contents without and with offloading, respectively.}
\label{fig:numerical-example-cost-generic}
\end{figure}

\subsection{Case Studies: Optimal Content Placement}

Under certain scenarios, analytical solutions for Problem~\ref{eq:optimization-problem-multi} can be found as well. In the remainder, we focus on two characteristic cases, which are analytically solvable, and provide useful insights for the system.

\subsection*{\underline{Offloading through SCs}}
We first consider the case where contents are offloaded only through SCs (i.e. when $p_{c}=0$ and $H_{MN}^{\theta}(0)=0$, or equivalently, $H_{0}^{\theta} = H_{SC}^{\theta}(0)$). This is the most common and feasible scenario considered in previous literature, since MNs are not required to share their contents, and thus incentive mechanisms are easier to implement. In this case and for\footnote{The ``offloading on the edge'' mechanism is meaningful if $C_{SC}<C_{BS}^{(TTL)}$, as in the opposite case, offloading would cost more than directly delivering from the macro-cell BSs.} $C_{SC}<C_{BS}^{(TTL)}$ it can be proved that Problem~\ref{eq:optimization-problem-multi} is convex and we compute the analytical solution in Result~\ref{THM:OPTIMAL-H0-NO-COOP-MULTIPLE}. For notation simplicity, we consider equal expiry times $TTL^{\theta}=TTL, ~\forall\theta\in\mathcal{M}$, and cache sizes $Q(i)=Q,~\forall i\in\mathcal{SC}$. However, Result~\ref{THM:OPTIMAL-H0-NO-COOP-MULTIPLE} can be easily modified for different\footnote{In particular, one has to substitute $\gamma$ with $\gamma^{\theta}=\mu_{\lambda}\cdot TTL^{\theta}$ for each content. The expressions for $H_{SC}^{\theta}(0)$ remain the same, and only the expressions of $L$ and $U$ need to be modified.} $TTL^{\theta}$ and $Q(i)$ values.

\begin{result}\label{THM:OPTIMAL-H0-NO-COOP-MULTIPLE}
Under a base scenario ($p_{c}=0$, $H_{MN}(0)=0$), the initial allocation $\overline{H_{SC}}$ that minimizes the total cost, is given by
\begin{equation*}
 H_{SC}^{\theta}(0) = \left\{
 \begin{array}{lc}
  N_{SC}& ,  R^{\theta}(0)>  U\\
 \frac{1}{\gamma}\cdot\ln\left(\frac{1}{L}\cdot R^{\theta}(0)\right)& , L\leq R^{\theta}(0)\leq U\\
  0		& , R^{\theta}(0)< L\\
 \end{array}
 \right.
 \end{equation*}
with  $\gamma = \mu_{\lambda}\cdot TTL$, $L = \frac{1}{\gamma\cdot\Phi}\cdot \left(1+\frac{\lambda_{0}}{C_{BH}}\right)$, $U =L\cdot e^{\gamma\cdot N_{SC}}$, $\Phi = \frac{C_{BS}^{(TTL)}-C_{SC}}{C_{BH}}$, and 
\begin{equation*}
\lambda_{0} = \inf \left\{\lambda_{0}\geq 0: \sum_{\theta\in\mathcal{M}} H_{SC}^{\theta}(0)\leq\sum_{i\in\mathcal{SC}} Q(i)\right\}
\end{equation*}
\end{result}
\begin{proof}
Applying the method of Lagrange multipliers~\cite{practical-optimization-book} to Problem~\ref{eq:optimization-problem-multi}, gives (for brevity we use the notation $H_{0}^{\theta}\equiv~H_{SC}^{\theta}(0^{+})=H_{SC}^{\theta}(0)$ and $R_{0}^{\theta}\equiv R^{\theta}(0^{+})=R^{\theta}(0)$):
\begin{multline}\label{eq:lagrangian-equation}
 \nabla \left(\sum_{\theta\in\mathcal{M}} C^{\theta}\right) = 
	\nabla \lambda_{0} \left(\sum_{i\in\mathcal{SC}} Q(i)-\sum_{\theta\in\mathcal{M}} H_{0}^{\theta}\right)\\
	+ \nabla \sum_{\theta\in\mathcal{M}}\lambda_{\theta} \cdot H_{0}^{\theta} +\nabla \sum_{\theta\in\mathcal{M}}\mu_{\theta}\cdot(N_{SC}-H_{0}^{\theta})
\end{multline}
where $\lambda_{0}\geq0$ and $\lambda_{\theta},\mu_{\theta}\geq0,\forall\theta\in\mathcal{M}$ are the langrangian multipliers.

Using the expression of Result~\ref{result:delivery-probability-single} for the delivery probability, the offloading cost (Result~\ref{thm:lemma-single-cost}) of a content $\theta$, in a base scenario, can be written as
\begin{equation}\label{eq:cost-single-for-optimization}
C^{\theta} = C_{BH}\cdot H_{0}^{\theta} + C_{SC}\cdot R_{0}^{\theta}+(C_{BS}^{(TTL)}-C_{SC})\cdot R_{0}^{\theta}\cdot e^{-\mu_{\lambda}\cdot H_{0}^{\theta} \cdot TTL}
\end{equation}

Substituting $C^{\theta}$ from \eq{eq:cost-single-for-optimization} to \eq{eq:lagrangian-equation}, the differentiation over $H_{0}^{\theta}$ gives
\begin{equation}\label{eq:optimal-Ho-langragian}
 H_{0}^{\theta} = \frac{1}{\gamma}\cdot \left[\ln\left(\Phi\cdot \gamma\cdot R_{0}^{\theta}\right)-\ln\left(1+\frac{\lambda_{0}-\lambda_{\theta}+\mu_{\theta}}{C_{BH}}\right)\right]
\end{equation}

\noindent The conditions for the lagrangian multipliers, i.e.
\begin{align*}
\lambda_{\theta} \cdot H_{0}^{\theta}&=0,~~~\text{and}~~~~\mu_{\theta}\cdot(N_{SC}-H_{0}^{\theta})=0~~~~,\forall\theta\in\mathcal{M}
\end{align*}
imply that $H_{0}^{\theta}$ either 
\begin{itemize}
\item[(a)] is given by \eq{eq:optimal-Ho-langragian} and $\lambda_{\theta}=\mu_{\theta}=0$, or
\item[(b)] is equal to $N_{SC}$ and $\lambda_{\theta}=0, \mu_{\theta}>0$, or 
\item[(c)] is equal to $0$ and $\lambda_{\theta}>0, \mu_{\theta}=0$ 
\end{itemize}

From condition (a), we calculate the limits of the interval within which the optimal $H_{0}^{\theta}$ is given by \eq{eq:optimal-Ho-langragian}. To find the lower limit, $L$, we set $H_{0}^{\theta}$ (\eq{eq:optimal-Ho-langragian} with $\lambda_{\theta}=\mu_{\theta}=0$) %
 equal to $0$ and for the upper limit, $U$, equal to $N_{SC}$, which give
\begin{subequations}\label{eqs:L-U}
 \begin{align}
 L &= \frac{1}{\gamma\cdot\Phi}\cdot\left(1+\frac{\lambda_{0}}{C_{BH}}\right)		\label{eq:L-low-limit}\\
U &= \frac{1}{\gamma\cdot\Phi}\cdot e^{\gamma\cdot N_{SC}}\cdot\left(1+\frac{\lambda_{0}}{C_{BH}}\right)		\label{eq:U-upper-limit}= L\cdot e^{\gamma\cdot N_{SC}}
\end{align}
\end{subequations}

Combining \eq{eq:optimal-Ho-langragian} and \eqs{eqs:L-U}, we can express the optimal placement as
\begin{equation}\label{eq:optimal-Ho}
 H_{0}^{\theta~*} = \left\{
 \begin{array}{lc}
  N_{SC}& ,  R_{0}^{\theta}>  U\\
 \frac{\ln\left(\gamma\cdot\Phi\cdot R_{0}^{\theta}\right)-\ln\left(1+\frac{\lambda_{0}}{C_{BH}}\right)}{\gamma}& , L\leq R_{0}^{\theta}\leq U\\
  0		& , R_{0}^{\theta}< L\\
 \end{array}
 \right.
 \end{equation}

The only unknown parameter in \eq{eq:optimal-Ho} is $\lambda_{0}$ (since we expressed $L$ and $U$ as functions of $\lambda_{0}$). Lemma~\ref{thm:monotonicity}, which we state and prove in Appendix~\ref{appendix:lemma-monotonicity}, suggests that the total cost, $\sum_{\theta\in\mathcal{M}}C^{\theta}$, is monotonically increasing with $\lambda_{0}$. Therefore, the optimal placement policy corresponds to the smaller \textit{non-negative} value of $\lambda_{0}$ that satisfies the storage constraint, $\sum_{\theta\in\mathcal{M}} H_{0}^{\theta}\leq \sum_{i\in\mathcal{SC}} Q(i)$, and this proves the Result
.
\end{proof}
In general, the value of the parameter $\lambda_{0}$ can be found (within some precision) with e.g. a binary search. Nevertheless, for a large number of contents, and given their popularity distribution, its value can be directly calculated using the Corollary~\ref{thm:corollary}, which follows after substituting the expression of Result~\ref{THM:OPTIMAL-H0-NO-COOP-MULTIPLE} and the popularity density function in the storage constraint $\sum_{\theta\in\mathcal{M}} H_{SC}^{\theta}(0) = \sum_{i\in\mathcal{SC}} Q(i)$.
\begin{corollary}\label{thm:corollary}
Under a content popularity distribution $\rho(x)$, 
the parameter $\lambda_{0}$ in Result~\ref{THM:OPTIMAL-H0-NO-COOP-MULTIPLE} is given by $\lambda_{0} = \max\left\{0,\hat{\lambda_{0}}\right\}$, where $\hat{\lambda_{0}}$ is the (minimum) solution of
\begin{multline*}
\int_{L}^{U}\ln\left(\gamma\cdot\Phi\cdot x\right)\cdot \rho(x)dx - \ln\left(1+\frac{\lambda_{0}}{C_{BH}}\right)\cdot \int_{L}^{U}\rho(x)dx \\
	+\gamma\cdot N_{SC}\cdot \int_{U}^{\infty} \rho(x)dx = \frac{\gamma\cdot N_{SC}\cdot Q}{M}
\end{multline*}
\end{corollary}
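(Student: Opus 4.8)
The plan is to take the storage constraint at equality, namely $\sum_{\theta\in\mathcal{M}} H_{SC}^{\theta}(0) = \sum_{i\in\mathcal{SC}} Q(i) = N_{SC}\cdot Q$ (at equality because, by Result~\ref{THM:OPTIMAL-H0-NO-COOP-MULTIPLE}, the total cost is monotone in $\lambda_0$, so we decrease $\lambda_0$ until either the constraint binds or $\lambda_0 = 0$), and substitute into it the piecewise expression for $H_{SC}^{\theta}(0)$ from Result~\ref{THM:OPTIMAL-H0-NO-COOP-MULTIPLE}. The three branches of that expression ($N_{SC}$, the logarithmic expression, and $0$) partition the contents according to whether their popularity $R_0^\theta$ falls in $(U,\infty)$, $[L,U]$, or $(-\infty,L)$, so the sum splits into three pieces: the contents with $R_0^\theta > U$ each contribute $N_{SC}$, the contents with $R_0^\theta < L$ contribute nothing, and the contents in between contribute $\frac{1}{\gamma}\left[\ln(\gamma\cdot\Phi\cdot R_0^\theta) - \ln(1+\frac{\lambda_0}{C_{BH}})\right]$.

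Next I would pass from the discrete sum over $\theta\in\mathcal{M}$ to an integral against the popularity density $\rho(x)$, i.e. replace $\sum_{\theta\in\mathcal{M}} g(R_0^\theta)$ by $M\int g(x)\rho(x)\,dx$ (this is the ``for a large number of contents'' regime mentioned just before the Corollary, where the empirical popularity distribution is well approximated by its density). Under this substitution, $\sum_{R_0^\theta > U} N_{SC}$ becomes $M\cdot N_{SC}\int_U^\infty \rho(x)\,dx$; the middle sum becomes $\frac{M}{\gamma}\left[\int_L^U \ln(\gamma\Phi x)\rho(x)\,dx - \ln(1+\frac{\lambda_0}{C_{BH}})\int_L^U \rho(x)\,dx\right]$; and the $R_0^\theta < L$ part vanishes. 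Setting the resulting expression equal to $N_{SC}\cdot Q$ and multiplying through by $\gamma/M$ yields exactly the displayed equation of Corollary~\ref{thm:corollary}. Finally, since $L$ and $U$ themselves depend on $\lambda_0$ (through $L = \frac{1}{\gamma\Phi}(1+\frac{\lambda_0}{C_{BH}})$ and $U = L e^{\gamma N_{SC}}$), this is an implicit equation in $\lambda_0$ alone; I would solve for the minimal root $\hat\lambda_0$, and then recall from Result~\ref{THM:OPTIMAL-H0-NO-COOP-MULTIPLE} that $\lambda_0$ must be non-negative, so the admissible value is $\lambda_0 = \max\{0,\hat\lambda_0\}$ — if $\hat\lambda_0 < 0$, the unconstrained placement already satisfies the capacity bound and we take $\lambda_0 = 0$.

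The one point that needs a little care (rather than a genuine obstacle) is the monotonicity needed to justify picking the \emph{minimum} solution: one must check that the left-hand side of the Corollary's equation is monotone decreasing in $\lambda_0$, so that a larger total allocation corresponds to a smaller $\lambda_0$ and the minimal root is the one that is consistent with the Lagrangian-multiplier sign condition $\lambda_0\ge 0$ together with cost-minimality (this is essentially a restatement, at the level of the aggregate constraint, of Lemma~\ref{thm:monotonicity}). As $\lambda_0$ increases, $L$ and $U$ both increase, shrinking the integration window $[L,U]$ and enlarging the window $[U,\infty)$ where the contribution is the smaller quantity $N_{SC}$ rather than the (typically larger) log term, and the explicit $-\ln(1+\frac{\lambda_0}{C_{BH}})$ factor decreases; so each piece of the left-hand side is non-increasing in $\lambda_0$, giving the claimed monotonicity and hence uniqueness/minimality of the relevant root. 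Everything else is the routine bookkeeping of splitting the sum and replacing it by its density-weighted integral.
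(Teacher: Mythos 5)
Your proposal is correct and follows essentially the same route as the paper, which simply substitutes the piecewise allocation of Result~\ref{THM:OPTIMAL-H0-NO-COOP-MULTIPLE} and the popularity density into the binding storage constraint $\sum_{\theta\in\mathcal{M}} H_{SC}^{\theta}(0) = \sum_{i\in\mathcal{SC}} Q(i) = N_{SC}\cdot Q$ and normalizes by $\gamma/M$. Your added remarks on monotonicity in $\lambda_{0}$ (justifying the minimal root and the $\max\{0,\hat{\lambda_{0}}\}$ clipping) are consistent with Lemma~\ref{thm:monotonicity} and only make explicit what the paper leaves implicit.
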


Result~\ref{THM:OPTIMAL-H0-NO-COOP-MULTIPLE} reveals how resources should be allocated:
(i) The optimal allocation is logarithmic in popularity, with either large or small caches. (ii) When capacity is limited, an extra factor ($\lambda_{0}$) is introduced, so that the \emph{relative} allocation remains logarithmic, but the absolute allocation is reduced (normalized) as the number of contents increase, or total capacity decreases. (iii) Some low popularity contents might get no allocation, either because it does not help the offloading cost, or because there is not enough capacity for them.


\textbf{\textit{Practical Example:}} Assume an urban area covered by $N_{BS}=4$ macro-cell BSs and $N_{SC}=100$ SCs. On average, in this area reside $N_{MN}=10000$ users\footnote{Vodafone Germany reported an average number of $1700$ users per cell (\url{http://mobilesociety.typepad.com/mobile_life/2009/06/base-station-numbers.html}). In an urban environment, users density is expected to be higher.} with an average meeting rate $\mu_{\lambda} = 3.3\cdot 10^{-5}$ meetings/sec~\cite{infocom-trace}. The cellular network has to deliver $M$ contents (e.g. YouTube video files of an average size $10MB$~\cite{youtube-traffic-from-edge}) with expiry time $TTL\approx 5min$ and popularity given by a bounded Pareto distribution in the interval $R_{0}\in[10, 1000]$ with shape parameter $\alpha = 0.5$~\cite{youtube-traffic-from-edge}. The costs are $C_{BS}^{(TTL)} = 10\cdot C_{BH}$ and $C_{SC}\ll C_{BH},C_{BS}^{(TTL)}$.

Substituting the given values, and taking the expectation over the popularity distribution, it follows that the necessary buffer size of a SC, $Q =\frac{E[H_{0}]}{N_{SC}}\cdot M\cdot L$, is approximately $1 MB$ per content. This means that, even under very high traffic demand, the caching capacity of the SCs would be adequate such that the last constraint of Problem~\ref{eq:optimization-problem-multi} is not violated; e.g. for $M=100000$ (i.e. each user requests $10$ videos per $5$ minutes!), the needed capacity is $Q=100GB$ (which is a feasible and relatively cheap investment). 

\subsection*{\underline{Offloading through MNs}} 
We now consider the case where offloading takes place only through MN-MN communication ($p_{c}>0$) and \textit{without} content storing on SCs (i.e. $H_{SC}(0)=0$). A content is initially sent by the BSs to $H_{MN}(0)$ (out of $R(0)$) of its requesters, which start disseminating it to the other requesters. However, not all nodes might be willing to participate by acting as holders, which in our framework means that each node (including the initial nodes in which the content is placed) cooperates with probability $p_{c}$. Therefore, we can write 
\begin{align*}
 H_{0}\equiv H_{MN}(0^{+})= p_{c}\cdot H_{MN}(0)
\end{align*}
Also, as defined in Lemma~\ref{thm:ODEs},
\begin{align*}
 R_{0}\equiv R(0^{+}) = R(0)-H_{MN}(0)
\end{align*}

As in the previous case, we assume equal expiry times $TTL^{\theta}=TTL, ~\forall\theta\in\mathcal{M}$.

\begin{result}\label{THM:OPTIMAL-H0-MN-MN}
Under an \textit{opportunistic MN-MN scenario} ($p_{c}>0$, $H_{MN}(0)=0$), the initial allocation $\overline{H_{MN}}$ that minimizes the total cost, is given by
\begin{equation*}
 H_{MN}^{\theta}(0) = \left\{
 \begin{array}{lc}
  R^{\theta}(0)& ,  R^{\theta}(0)\leq  OPT^{\theta}\\
 OPT^{\theta} & , 0\leq OPT^{\theta}< R^{\theta}(0)\\
  0		& , OPT^{\theta}< 0\\
 \end{array}
 \right.
 \end{equation*}
where $\displaystyle{OPT^{\theta} =\frac{R^{\theta}(0)\cdot \left(\sqrt{\Phi^{'}}\cdot e^{\frac{1}{2}\gamma\cdot p_{c}\cdot R^{\theta}(0)}-1\right)}{e^{\gamma\cdot p_{c}\cdot R^{\theta}(0)}-1}}$, and $\Phi^{'}~=~\frac{C_{BS}^{(TTL)}-C_{D2D}}{C_{BS}-C_{D2D}}$ and $\gamma = \mu_{\lambda}\cdot TTL$.
\end{result}
\begin{proof}
The cost for offloading a content $\theta$ under an opportunistic MN-MN scenario, where $H_{0}^{\theta} = p_{c}\cdot H_{MN}^{\theta}(0)$ and $R_{0}^{\theta} = R(0)^{\theta}-H_{MN}^{\theta}(0)$, is (see Result~\ref{thm:lemma-single-cost})
\begin{multline}\label{eq:single-cost-MN-MN}
 C^{\theta} = C_{BS}\cdot H_{MN}^{\theta}(0)\\+ \left(C_{D2D}-C_{BS}^{(TTL)} \right)\cdot(R^{\theta}(0)-H_{MN}^{\theta}(0))\cdot  P\{T_{d}\leq TTL\} \\+ C_{BS}^{(TTL)} \cdot(R^{\theta}(0)-H_{MN}^{\theta}(0))\hfill~
\end{multline}
Similarly, for $H_{0}^{\theta} = p_{c}\cdot H_{MN}^{\theta}(0)$ and $R_{0}^{\theta} = R^{\theta}(0)-H_{MN}^{\theta}(0)$, the delivery probability $P\{T_{d}\leq TTL\}$ can be written as
\begin{equation}\label{eq:P-TTL-MN-MN}
 P\{T_{d}\leq TTL\} = 1- \textstyle\frac{R^{\theta}(0)}{R^{\theta}(0)+H_{MN}^{\theta}(0)\cdot \left(e^{\gamma \cdot p_{c}\cdot R^{\theta}(0)}-1\right)}
\end{equation}
where $\gamma = \mu_{\lambda}\cdot TTL$.

Substituting \eq{eq:P-TTL-MN-MN} in \eq{eq:single-cost-MN-MN}, and taking the derivative over the initial number of transmissions $H_{MN}^{\theta}(0)$, gives
\begin{multline}\label{eq:derivative-cost-MN-MN}
 \frac{dC^{\theta}}{dH_{MN}^{\theta}(0)} = (C_{BS}^{(TTL)}-C_{D2D})\\+\frac{(C_{D2D}-C_{BS})\cdot(R^{\theta}(0))^{2}\cdot e^{\gamma\cdot p_{c}\cdot R^{\theta}(0)}}{\left(R^{\theta}(0)+H_{MN}^{\theta}(0)\cdot (e^{\gamma\cdot p_{c}\cdot R^{\theta}(0)}-1)\right)^{2}}
\end{multline}
From \eq{eq:derivative-cost-MN-MN} it follows that 
\begin{equation*}
 \frac{dC^{\theta}}{dH_{MN}^{\theta}(0)} = \left\{
 \begin{array}{lc}
  <0 & ,  H_{MN}^{\theta}(0)<  OPT^{\theta}\\
  >0& ,  H_{MN}^{\theta}(0)>  OPT^{\theta}
 \end{array}
 \right.
 \end{equation*}
where
\begin{equation}
 OPT^{\theta} =\frac{R^{\theta}(0)\cdot \left(\sqrt{\Phi^{'}}\cdot e^{\frac{1}{2}\gamma\cdot p_{c}\cdot R^{\theta}(0)}-1\right)}{e^{\gamma\cdot p_{c}\cdot R^{\theta}(0)}-1}
\end{equation}
Therefore, when $OPT^{\theta}\in[0,R^{\theta}(0)]$, the minimum cost is achieved for $H_{MN}^{\theta}(0) = OPT^{\theta}$. Otherwise, for $OPT^{\theta}\notin[0,R^{\theta}(0)]$, and since it must hold that $H_{MN}^{\theta}(0)\in[0,R^{\theta}(0)]$, the minimum cost is achieved for the largest or lowest possible values of $H_{MN}^{\theta}(0)$.
\end{proof}

Result~\ref{THM:OPTIMAL-H0-MN-MN} reveals how content storage should be delivered when offloading only through MNs is considered. As it can be seen, the initial allocation is much different that in the offloading through SCs case (see Result~\ref{THM:OPTIMAL-H0-NO-COOP-MULTIPLE}), and this is mainly due to the fact that some of the requesters get the content at the beginning.

\section{Simulation Results}\label{sec:single-validation}
To validate our analysis, we compare the theoretical predictions against Monte Carlo simulations (Section~\ref{sec:model-validation}). Then, we evaluate the cost efficiency of "offloading on the edge" in scenarios with realistic traffic demand patterns (Section~\ref{sec:simulation-cost-efficiency}).

\subsection{Model Validation}\label{sec:model-validation}

\subsubsection{Synthetic Scenarios}
We first compare the theoretical results against Monte Carlo simulations on various synthetic scenarios. Synthetic simulations allow us to create a number of different scenarios with varying parameters.

We generate synthetic networks, conforming to the model of Section~\ref{sec:analysis-preliminaries}, as following:\\
(i) We choose a probability distribution $f_{\lambda}(\lambda)$ and for each pair $\{i,j\}$ we draw randomly a meeting rate $\lambda_{ij}$.\\
(ii) We create a sequence of contact events for every pair in the network with rate (Poisson processes with rates $\lambda_{ij}$).\\
(iii) We select the content traffic parameters ($R_{0}$, $H_{0}$, $p_{c}$, $H_{SC}(0)$, $H_{MN}(0)$, $N_{SC}$), and we simulate a large number of content disseminations, choosing randomly each time the set of requesters and the set of holders (note, however, that the set of holders depends also on the parameters $H_{SC}(0)$, $H_{MN}(0)$ and $N_{SC}$).

We have created many scenarios with different combinations of mobility ($f_{\lambda}(\lambda)$) and traffic ($R_{0}$, $H_{0}$, $p_{c}$, $H_{SC}(0)$, $H_{MN}(0)$, $N_{SC}$) characteristics. We present here a representative subset of them, which allow us demonstrate the accuracy of our predictions and their sensitivity when varying certain parameters. In the presented scenarios we create nodes mobility according to a gamma distribution $f_{\lambda}(\lambda)$ with mean value $\mu_{\lambda}=1$ (i.e. normalized value) and variance $\sigma_{\lambda}^{2}$ (or, equivalently, coefficient of variation $CV_{\lambda} = \frac{\sigma_{\lambda}}{\mu_{\lambda}}$)~\cite{passarella-dunbar-journal}. Gamma distributions allow us to capture different levels of mobility heterogeneity by varying the value of $CV_{\lambda}$.

\begin{figure}
\subfigure[$H_{0}=1$, $p_{c}=0.5$]{\includegraphics[width=0.49\linewidth]{./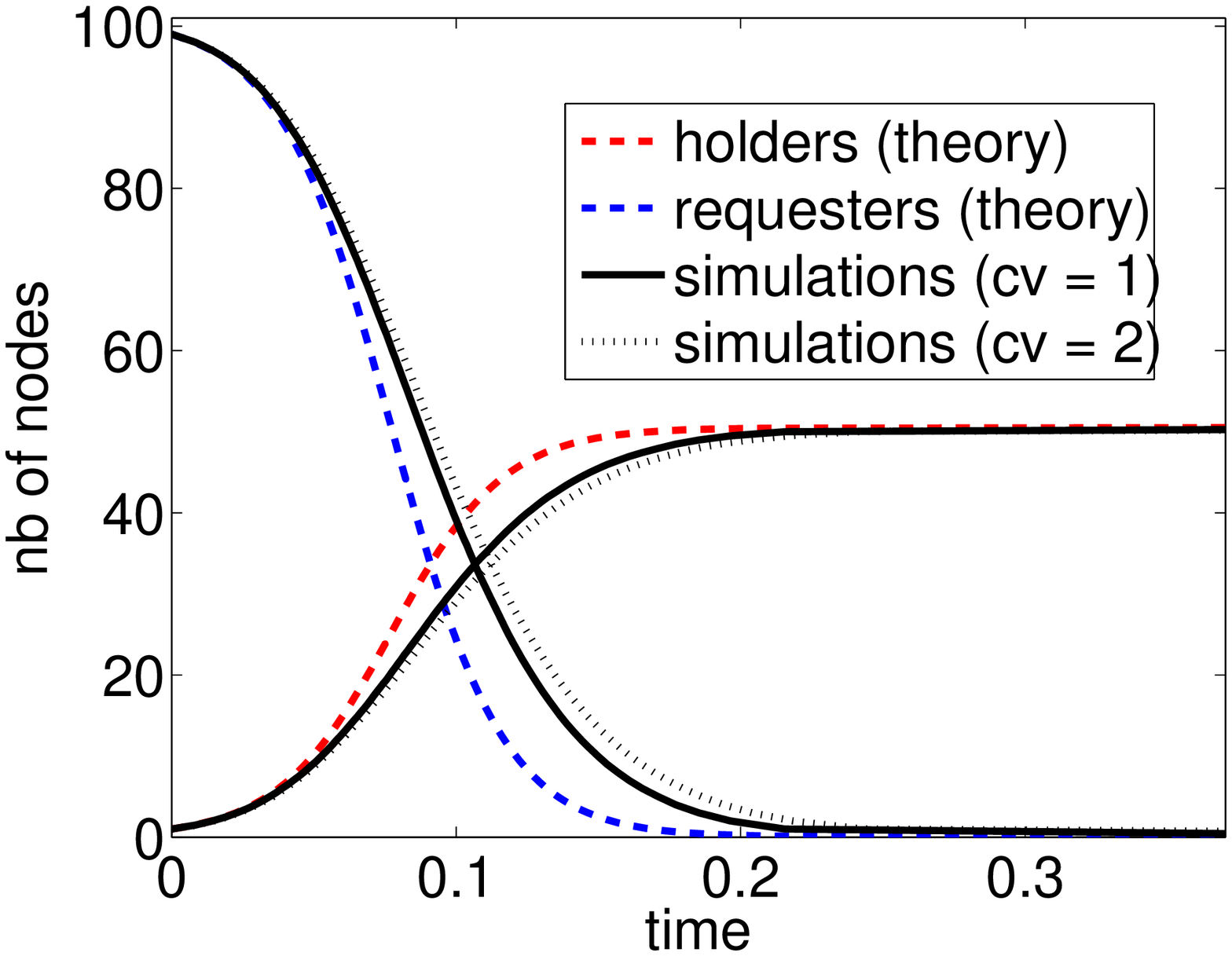}\label{fig:synthetic-H(t)-drop=0}}
\subfigure[$H_{0}=1$, $p_{c}=0.5$]{\includegraphics[width=0.49\linewidth]{./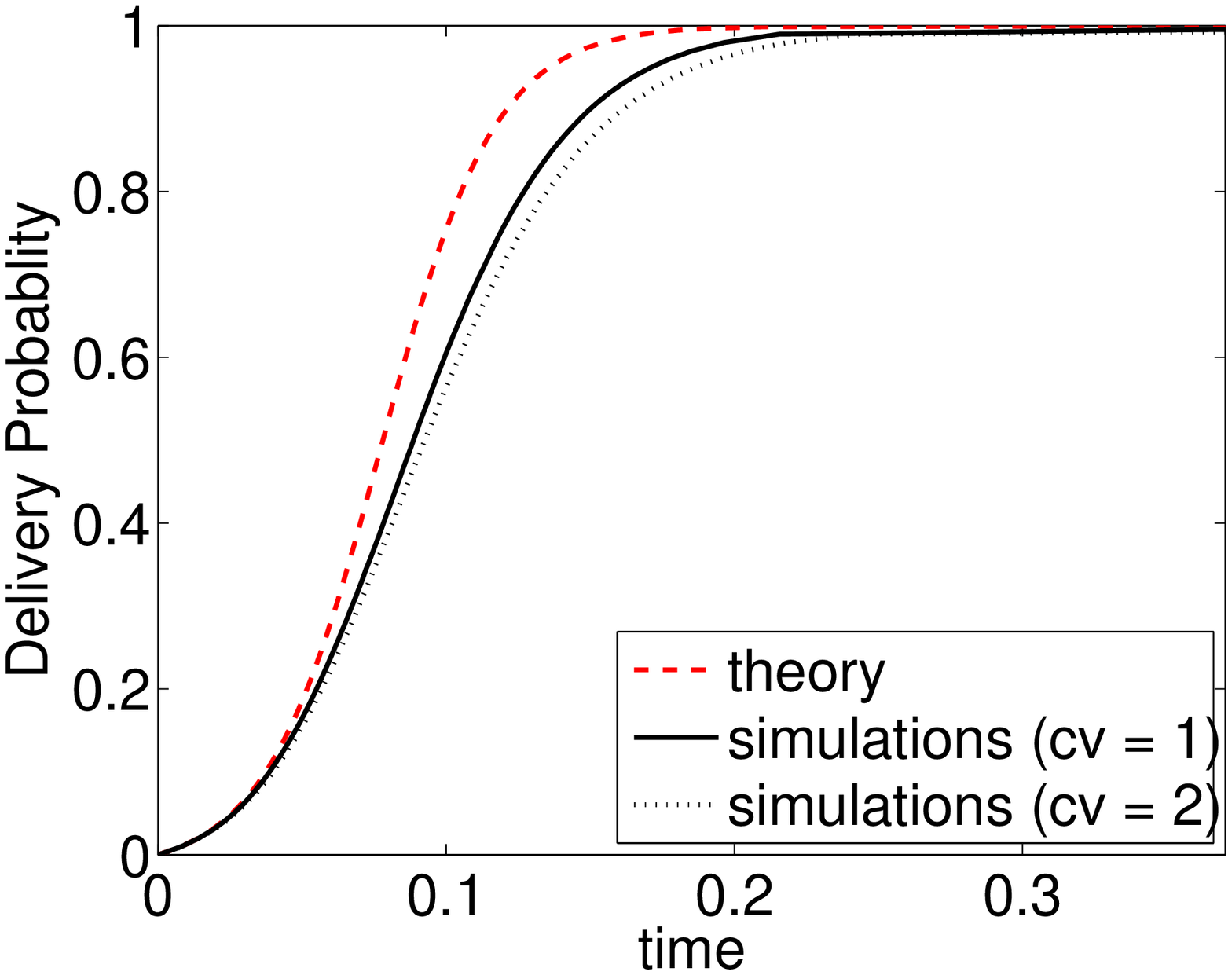}\label{fig:synthetic-Prob-drop=0}}
\caption{(a) Expected number of holders, $H(t)$, and requesters, $R(t)$, over time for generic scenarios with $R_{0}=100$, $H_{SC}=0$; (b) shows the corresponding results for the delivery probability, i.e. $P\{T_{d}\leq TTL\}$, where $TTL$ is the x-axis variable.}
\label{fig:synthetic-H(t)-Prob}
\end{figure}

\textbf{Content Dissemination.} %
In Fig.~\ref{fig:synthetic-H(t)-Prob} we compare simulation results (average values over the different runs) of expected number of holders ($H(t)$) / requesters ($R(t)$) and content delivery probability $P\{T_{d}\leq TTL\}$ with the respective theoretical predictions (Lemma~\ref{thm:ODEs} and Result~\ref{result:delivery-probability-single}, respectively). Considering the same content traffic parameters, we simulated scenarios with moderate ($CV_{\lambda}=1$) and high ($CV_{\lambda}=2$) mobility variance, in order to show how mobility heterogeneity affects the accuracy of our predictions. It can be seen that our predictions become more accurate for lower mobility heterogeneity ($CV_{\lambda}=1$). This is due to the mean field approximation of the transitions rates we used in the analysis (see Section~\ref{sec:single-generic}). For scenarios with even lower mobility heterogeneity (e.g. $CV_{\lambda}=0.5$ - not shown in the plots) the accuracy is even better. Additionally, we need to highlight that these results correspond to an initial allocation of only one holder ($H_{0}=1$), which is the \textit{worst case} scenario (i.e. lowest accuracy of the mean field approximation, and, thus our predictions) among the ones with the given mobility and traffic (other than $H_{0}$) characteristics. In the same scenarios, when considering a few more initial holders, e.g. $H_{0}=10$, theoretical results achieve an almost exact prediction. 

Similar observations can be made in Fig.~\ref{fig:synthetic-Delay}, where we compare the theoretically predicted delivery delays with the respective simulation results. The results in Fig.~\ref{fig:synthetic-Delay} are in accordance with the above observations, i.e. the predictions' accuracy increases for (a) lower $CV_{\lambda}$, and (b) higher number of initial holders $H_{0}$.

\begin{figure}
\subfigure[$p_{c}=0.5$]{\includegraphics[width=0.49\linewidth]{./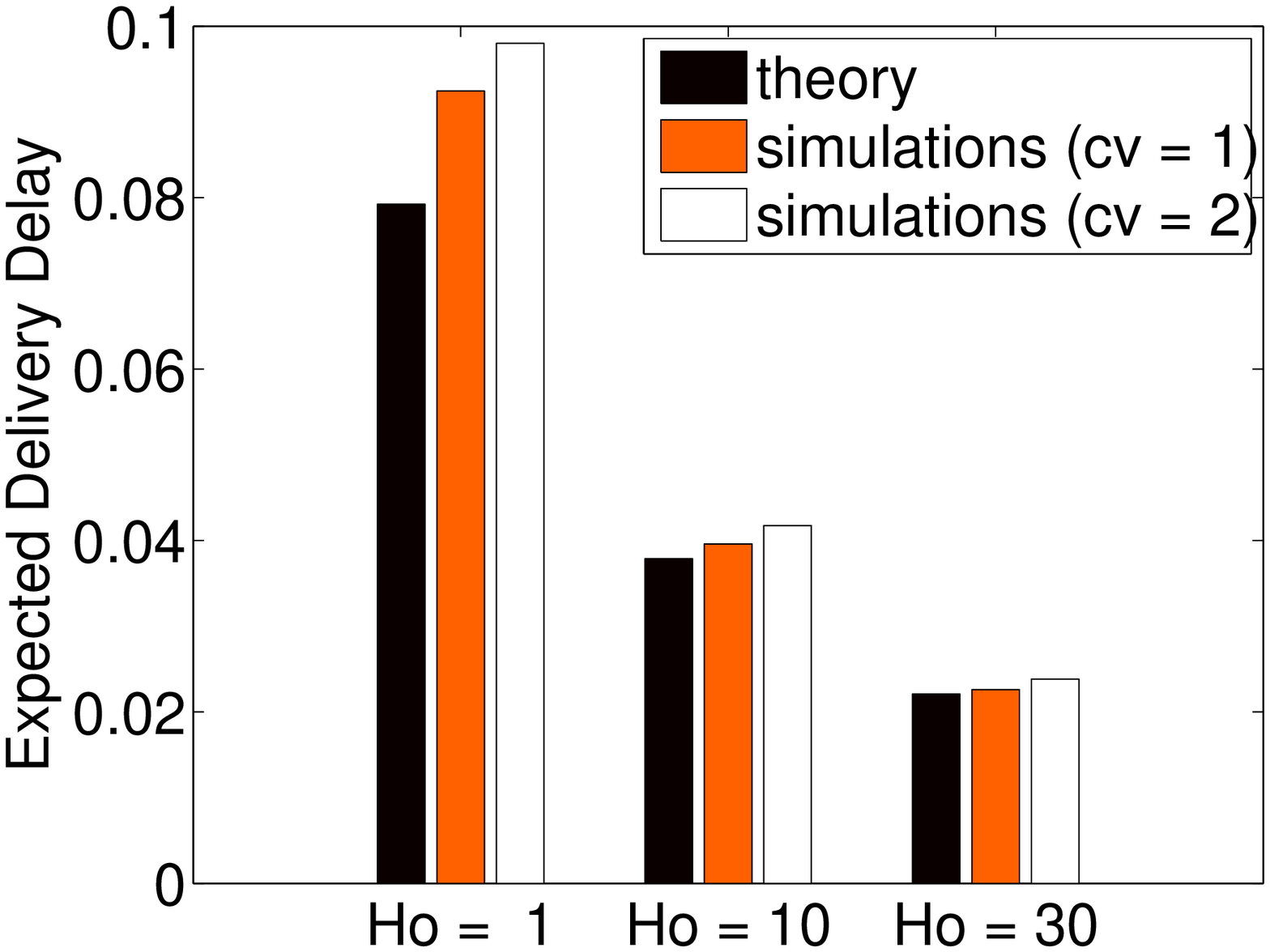}\label{fig:synthetic-Delay-drop=0}}
\subfigure[$p_{c}=1$]{\includegraphics[width=0.49\linewidth]{./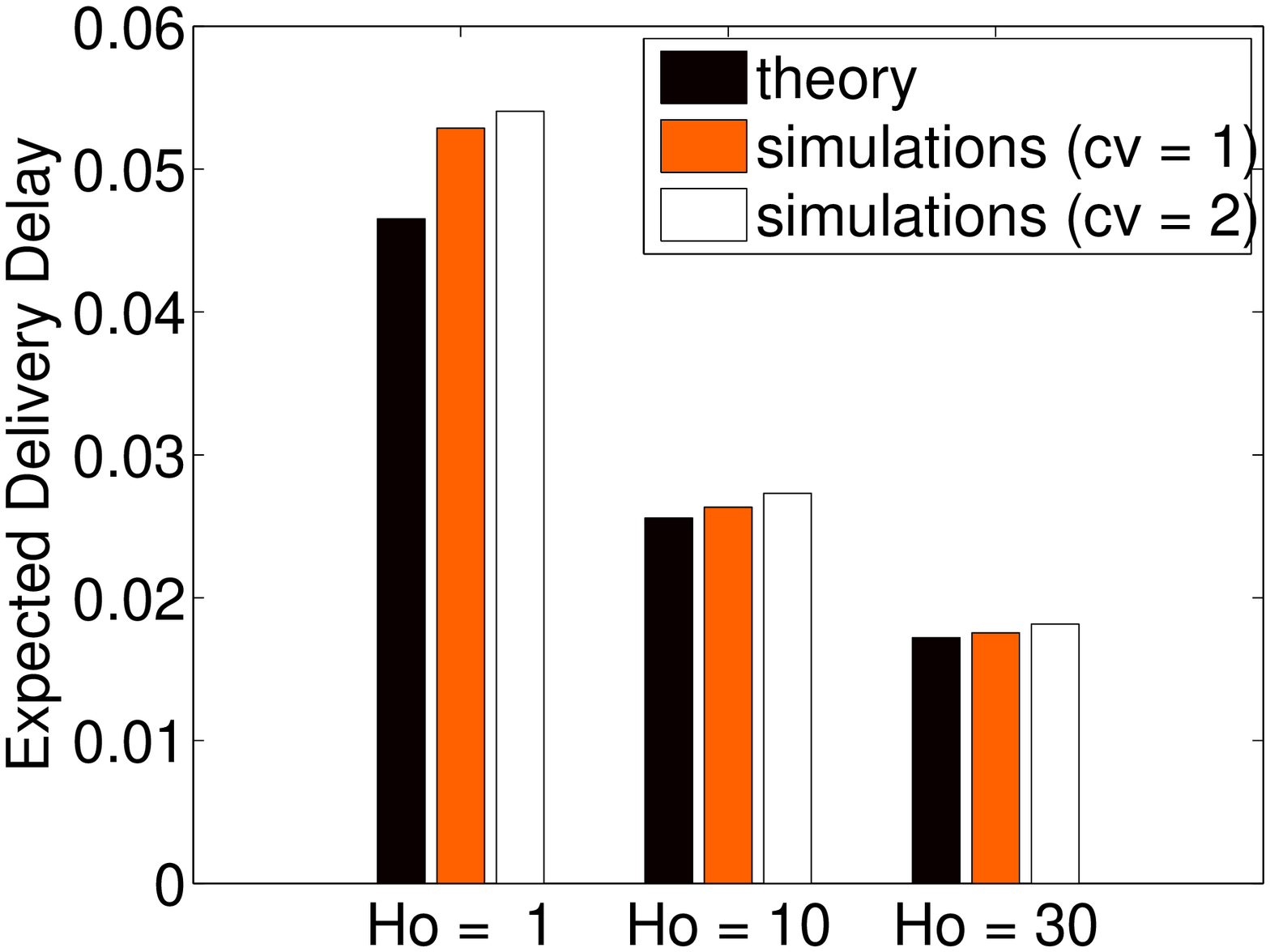}\label{fig:synthetic-Delay-drop>0}}
\caption{Expected delivery delay, $E[T_{d}]$, for various generic scenarios with $R_{0}=100$, $H_{SC}=0$ and (a) $p_{c}=0.5$, (b) $p_{c}=1$.}
\label{fig:synthetic-Delay}
\end{figure}

\textbf{Offloading Cost.} %
We finally present results that validate the cost optimization analysis of Section~\ref{sec:single-cost-optimization}. Fig.~\ref{fig:cost-single-synthetic} shows the incurred cost for the cellular network (y-axis) under different number of initial holders $H_{0}$ (x-axis) for various generic traffic scenarios. Different cooperation policies (top plots: $p_{c}=1$, middle plots: $p_{c}=0.5$, and bottom plots: $p_{c}=0$) and expiry times $TTL$ (or, equivalently, $\gamma = \mu_{\lambda}\cdot TTL$) are considered. It can be seen that our results accurately predict the content dissemination cost.

Some remarkable observations about the optimal initial allocation of holders that can be made in Fig.~\ref{fig:cost-single-synthetic} (as well as in other scenarios we investigated) are the following: (i) In many cases, offloading on the edge can significantly reduce the cost of a content dissemination. For instance, in the scenario shown in Fig.~\ref{fig:cost-single-synthetic} (bottom plot - bottom curve / black color), even without node cooperation ($p_{c}=0$), offloading on the edge can reduce the cost $10$ times, compared to the corresponding scenario without offloading (i.e. $C=100$). (ii) An optimal initial allocation requires only a small number of (initial) storage resources, which in most of the cases we present is equal or less than $20\%$ of the content requesters. (iii) The higher the allowed delay (i.e. expiry time $TTL$ or parameter $\gamma$) is, the larger the gain the cellular network can have is. For example, consider the red line ($\gamma =0.05 $) in the bottom plot. Increasing $\times10$ the value of $TTL$ (black line - $\gamma = 0.5$) can reduce the cost (e.g. for $H_{0}=5$ which is close to the optimal allocation) almost $8$ times.

\begin{figure}
\centering
{\includegraphics[width=0.5\linewidth]{./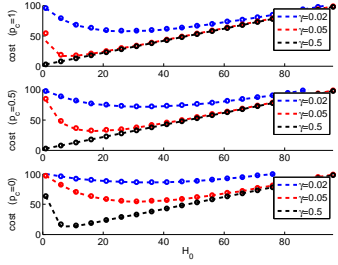}\label{fig:cost-single-synth}}
\caption{Single content offloading cost $C$ (Lemma~\ref{thm:lemma-single-cost}) under different number of initial holders ($H_{0}$, x-axis) for a synthetic mobility scenario with $R_{0}=100$, $H_{SC}=H_{0}$, and $C_{BH}=C_{BS}^{(TTL)}=50 \cdot C_{SC}$. Dashed lines correspond to theoretical predictions and markers to simulation results. We denote $\gamma = \mu_{\lambda}\cdot TTL$.}
\label{fig:cost-single-synthetic}
\end{figure}

\subsubsection{Mobility Traces}

Results of synthetic simulations demonstrate a significant accuracy of our predictions and verify the arguments used in the derivation of our results. In this section, we present results in more challenging scenarios, where node mobility characteristics depart from our model assumptions.

Specifically, we use the TVCM~\cite{tvcm} and SLAW~\cite{slaw} mobility models, which have been shown to capture well real mobility patterns, like power-law flights~\cite{slaw}, community structure~\cite{tvcm}, etc. The generated scenarios we present are\\
\noindent\textit{\textbf{TVCM scenario}}: Mobile nodes move in a square area $1000m\times1000m$, which contains three areas of interest (communities). Nodes move mainly inside their community ($60\%$ of the time) and leave it for a few short periods. Macro-cell BSs provide full coverage of the whole area, while $25$ non-overlapping (placed on a grid) small-cell base stations (SCs), with a communication range of $100m$, provide further connectivity. Mobile nodes are equipped with \textit{D2D} communication interfaces, for which we assume a range of~$30m$.\\
\textit{\textbf{SLAW scenario}}: A square area of edge length $2000m$ is simulated, where mobile nodes either move or remain static for a maximum time of $20min$ (the other mobility parameters are set as in the source code provided by~\cite{slaw}). Macro-cell BSs cover the whole area and coexist with $100$ non-overlapping small-cells. Communication ranges are set as above.


In Fig.~\ref{fig:traces-Prob} we present the delivery probability $P\{T_{d}\leq TTL\}$, along with the theoretical prediction, for two content traffic scenarios in the TVCM (Fig.~\ref{fig:traces-Prob-tvcm}) and SLAW (Fig.~\ref{fig:traces-Prob-slaw}) traces. Contents with popularity $R(0)=50$ are initially cached to $H(0)$ edge nodes (half of which are MNs). The MNs' participation in offloading is set to $p_{c}=0.5$. In the TVCM trace (Fig.~\ref{fig:traces-Prob-tvcm}) it can be seen that the accuracy of our results is significant, despite the community structure of the network (which cannot be captured explicitly by our mobility Assumption~\ref{ass:heterogeneous-mobility}). In the SLAW scenario (Fig.~\ref{fig:traces-Prob-slaw}), our results overestimate the delivery probability. However, note here that the number of holders in the SLAW scenario is smaller, and, thus, our approximation is expected to be less accurate. For scenarios with more initial holders the accuracy of the predictions increase (see e.g. Fig.~\ref{fig:cost-single-SLAW}, where the accuracy is higher for higher $H_{0}$ values). 

\begin{figure}
\subfigure[TVCM: $H(0)=10$, $R(0)=50$]{\includegraphics[width=0.49\linewidth]{./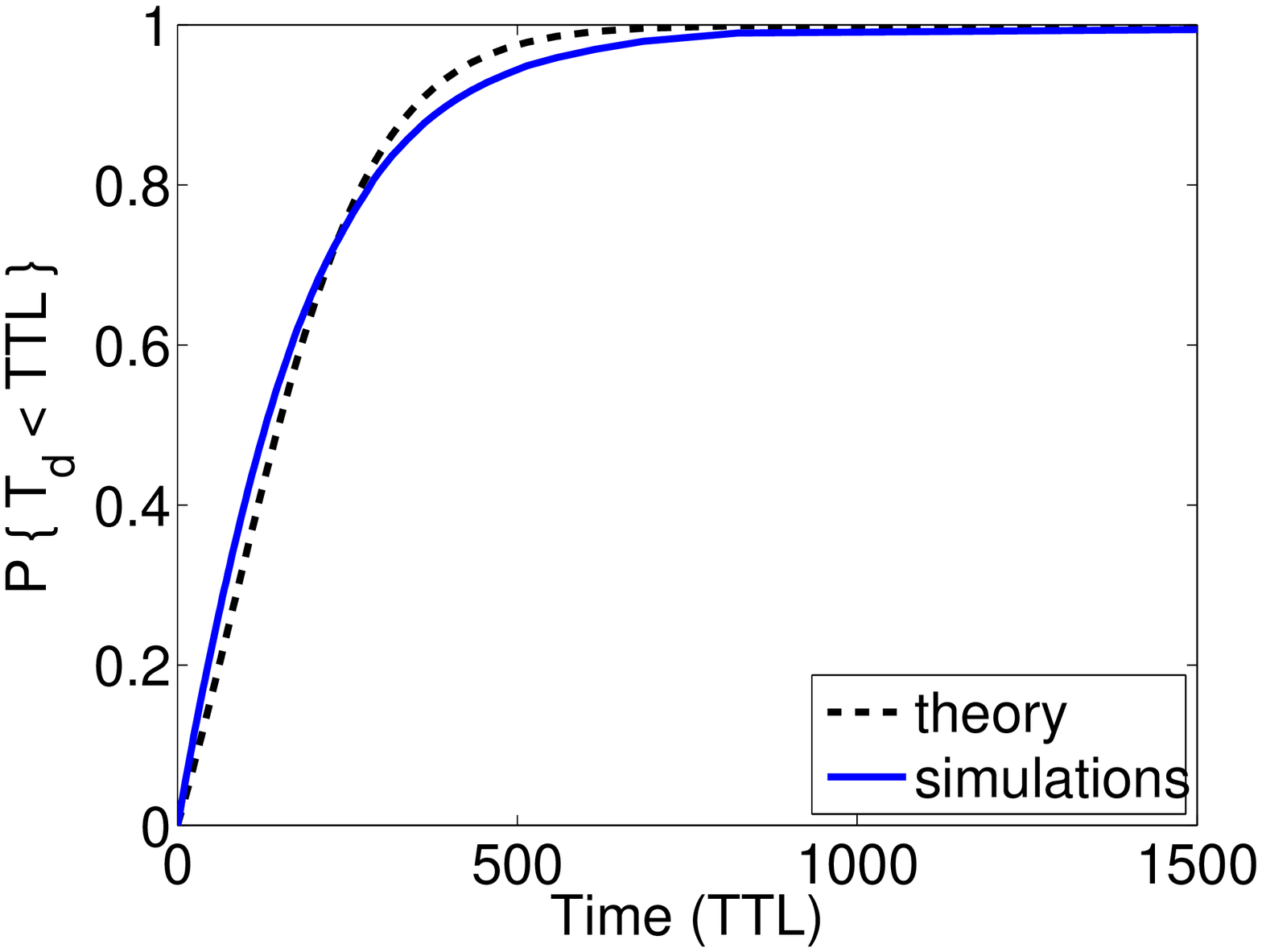}\label{fig:traces-Prob-tvcm}}
\subfigure[SLAW: $H(0)=5$, $R(0)=50$]{\includegraphics[width=0.49\linewidth]{./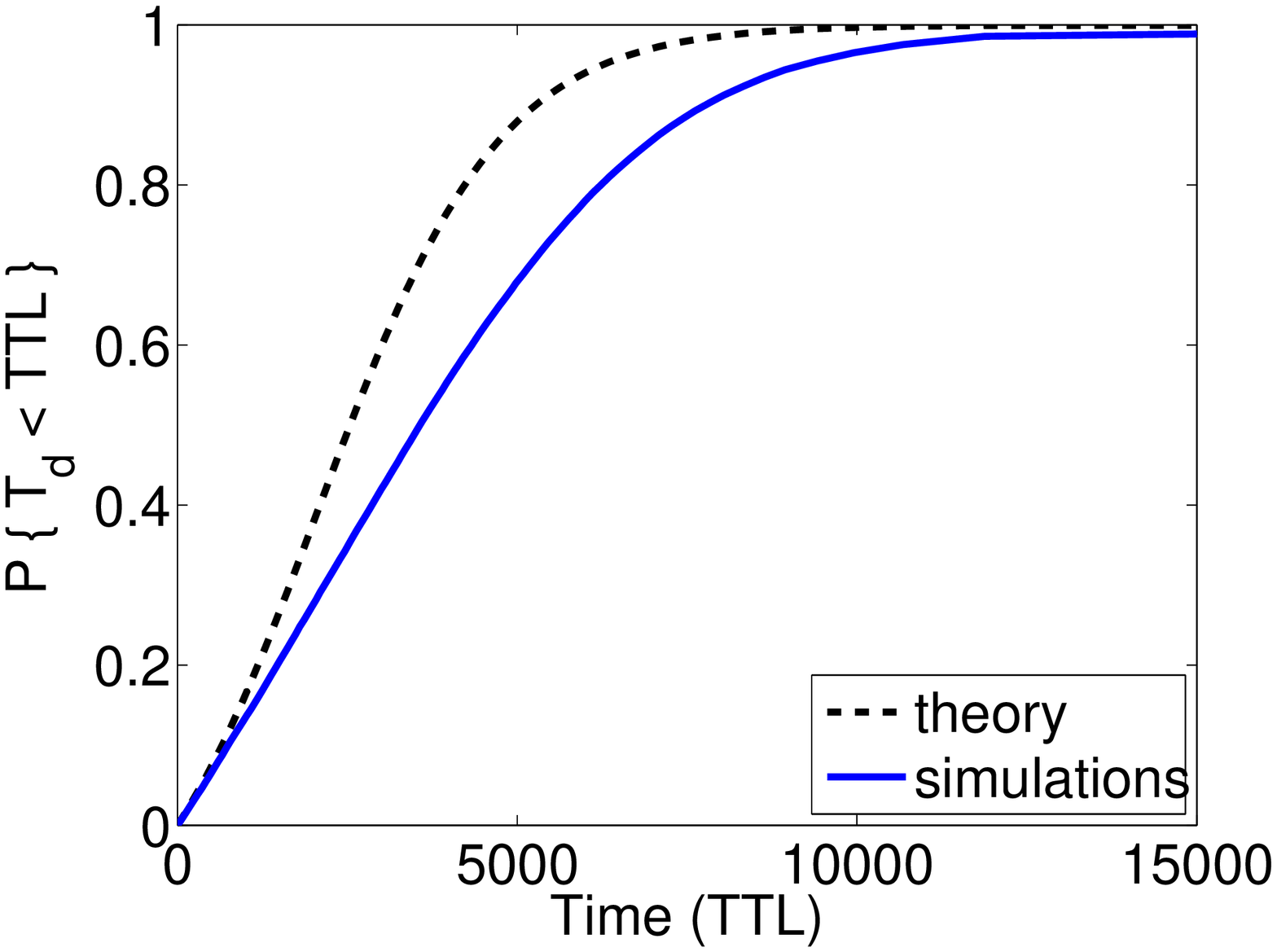}\label{fig:traces-Prob-slaw}}
\caption{Delivery probability $P\{T_{d}\leq TTL\}$ over time $TTL$ (x-axis), for the (a) TVCM and (b) SLAW scenarios with $p_{c}=0.5$ and $H_{SC}(0)=H_{MN}(0)=\frac{H(0)}{2}$.}
\label{fig:traces-Prob}
\end{figure}


Although in some points the theoretical performance metrics deviate considerably from simulations (e.g. $20\%$), the accuracy of the cost metrics (Lemma~\ref{thm:lemma-single-cost}) is less affected. Fig.~\ref{fig:cost-single} shows the incurred cost for delivering a content to $R(0)=30$ requesters (y-axis) under different number of initial holders $H_{0}$ (x-axis). Different initial placement policies ($H_{SC}(0), H_{MN}(0)$), levels of MNs participation ($p_{c}$), and expiry times $TTL$  are considered. In the majority of scenarios our results accurately predict the offloading cost. Yet, even in the case where the predictions are less accurate (e.g. in Fig.~\ref{fig:cost-single-SLAW} for $\mu_{\lambda}\cdot TTL=0.05$), they can still capture the actual optimal initial allocation regimes.
%

\begin{figure}
\subfigure[TVCM ($H_{SC}(0)=H_{MN}(0)$)]{\includegraphics[width=0.49\linewidth]{./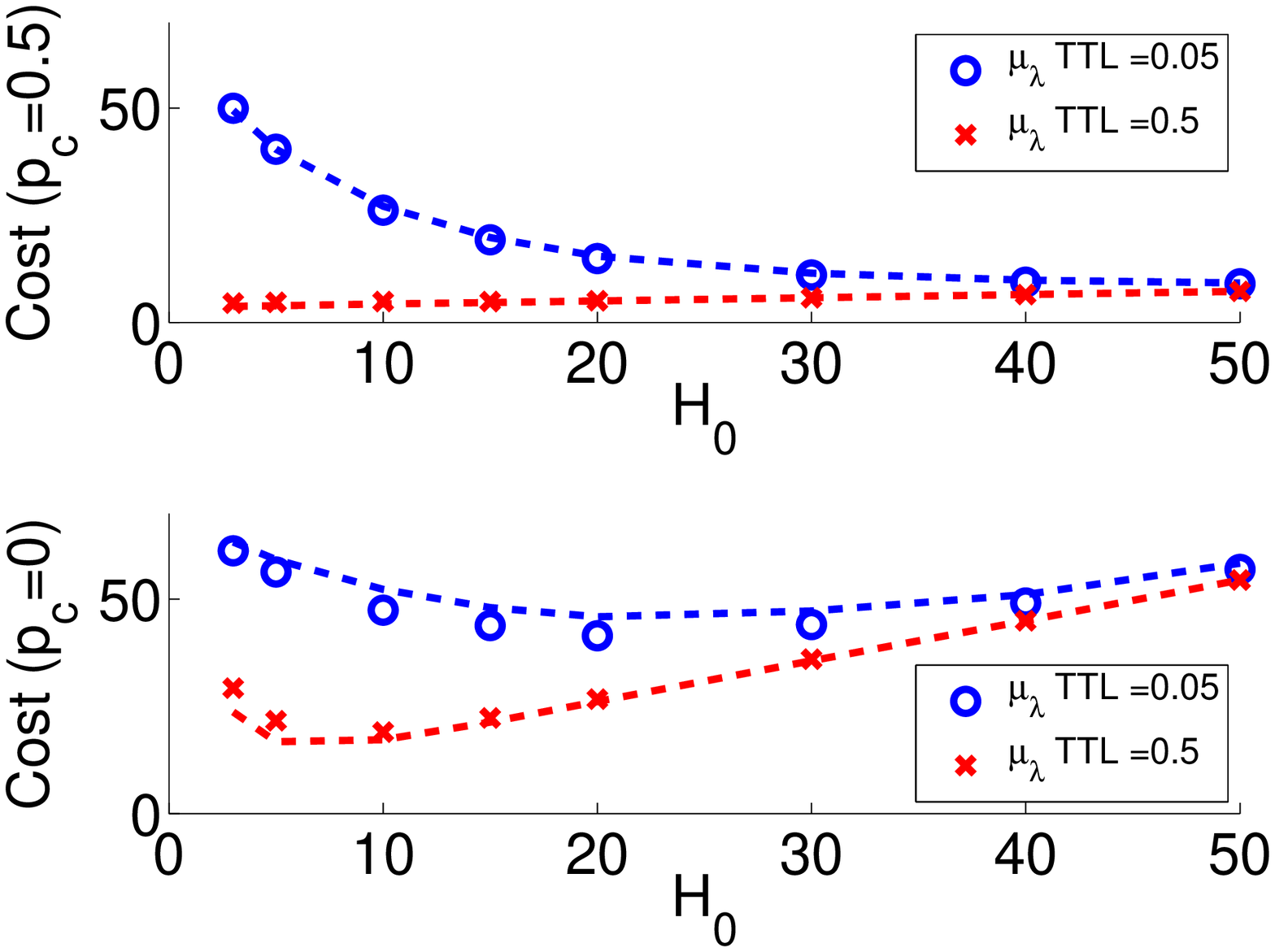}\label{fig:cost-single-TVCM}}
\subfigure[SLAW ($H_{SC}(0)=0$, )]{\includegraphics[width=0.49\linewidth]{./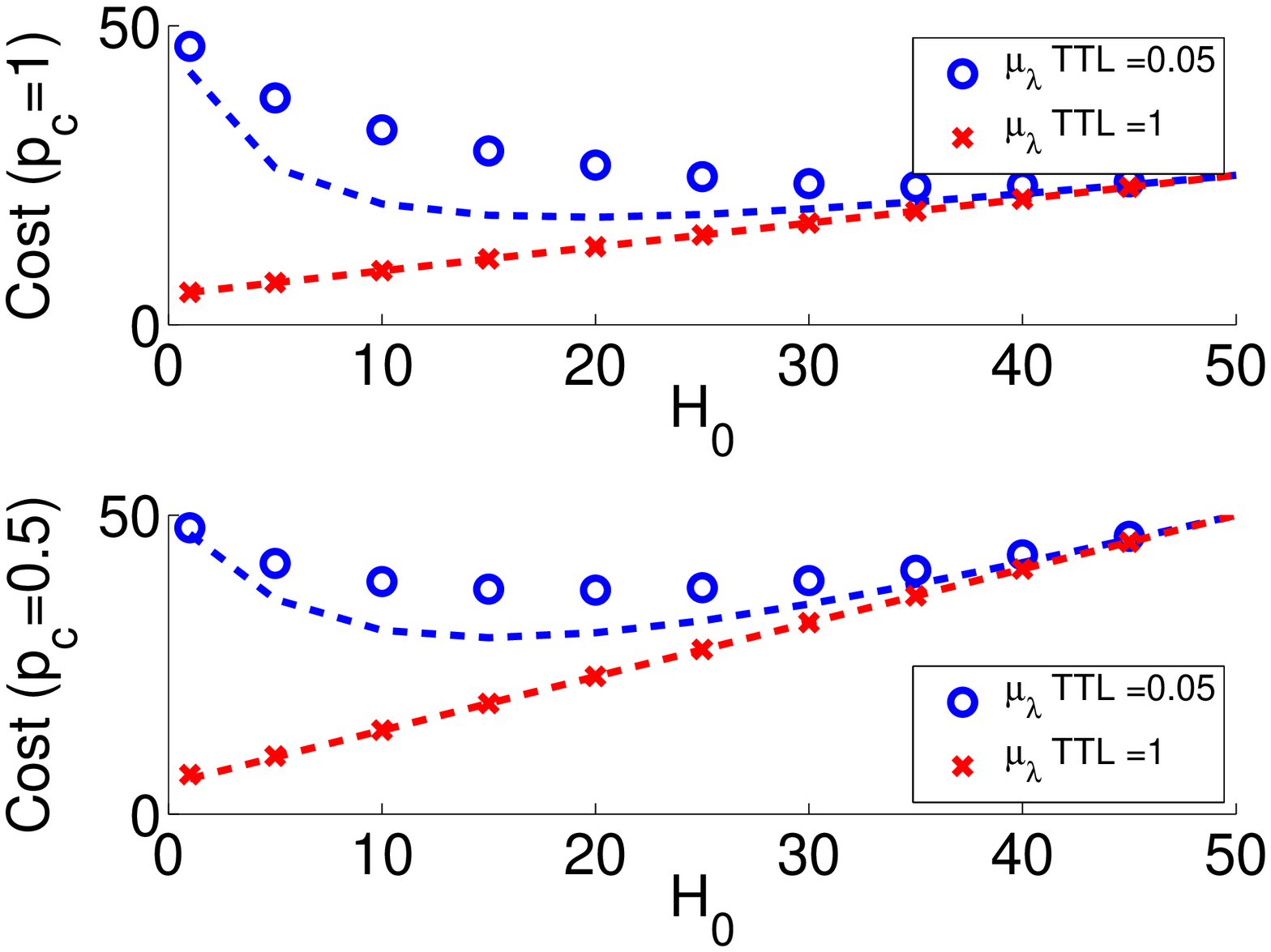}\label{fig:cost-single-SLAW}}
\caption{Offloading cost (y-axis) vs number of initial holders ($H_{0}$, x-axis). Dashed lines correspond to theoretical predictions and markers to simulation results. Transmission costs are: (a) $C_{BS}^{(TTL)}=10\cdot C_{BH}=10\cdot C_{BS}=20\cdot C_{SC}=20\cdot C_{D2D}$ (top plot) and $C_{BS}^{(TTL)}= C_{BH}= C_{BS}=10\cdot C_{SC}$ (bottom plot); (b) $C_{BS}^{(TTL)}=2\cdot C_{BS}=10\cdot C_{D2D}$.}
\label{fig:cost-single}
\end{figure}

\subsection{Performance Evaluation}\label{sec:simulation-cost-efficiency}
After validating our analysis, we now investigate the cost efficiency of the "offloading on the edge" mechanism in a realistic traffic scenario. We present results that demonstrate the effect of different system factors, and provide useful conclusions for cellular network operators.

\noindent The parameters of the scenario we consider are the following:\\
\textit{$-$ Popularity:} Content popularity has been shown to follow a power-law distribution~\cite{youtube-traffic-from-edge,top-video-cellular,pptv-mobile-vod}. Thus, we draw the popularity of each content from a bounded-Pareto distribution ($R_{0}\in[1, 100]$) with shape parameter $\alpha = 0.5$~\cite{youtube-traffic-from-edge}.\\
\textit{$-$ Traffic Intensity:} Mobile operators do not release real mobile traffic data. To this end, and since traffic demand is directly related to the number of mobile users that reside in an area, we infer traffic patterns from an available dataset of the Gowalla \textit{location-based social network}. The Gowalla dataset~\cite{Theus-comcom2012} contains information (logs of position and time) of user checkins (through their mobile devices) in different venues. In the scenarios we present, we create different number of contents during a $24h$ time interval. The number of contents $M$ is proportional to the number of mobile users that checked-in a certain area (we selected the most popular venue) at the same time. The maximum number of concurrent contents is~$M=200$.\\
\textit{$-$ Delay Tolerance:} We set equal expiry times $TTL$ for each content, and we consider different sets of scenarios with low ($TTL=5min$), moderate ($TTL=25min$), and high ($TTL=25min$) delay tolerance.\\
\textit{$-$ Costs:} The relative costs are set $C_{BS} = C_{BS}^{(TTL)} = 10\cdot C_{BH} = 20\cdot C_{SC}= 20\cdot C_{D2D}$, values selected based on some data presented in~\cite{johansson2007cost}.\\
\textit{$-$ Node Mobility:} We use the TVCM mobility scenario presented in the previous section.

\subsection*{\underline{Offloading through SCs}} 
We first consider the case of offloading through SCs. We simulate two sets of scenarios with small ($Q=5$) or large ($Q=200$) caches. We choose the optimal initial caching policy of Result~\ref{THM:OPTIMAL-H0-NO-COOP-MULTIPLE}.

In Fig.~\ref{fig:varying-traffic-demand} we present the total offloading cost (marked lines) incurred for the cellular network operator over different times of the day. The gray area shows the intensity of mobile users that reside in the considered area. The dashed line denotes traffic demand over time, or equivalently, the cost when content delivery \textit{without} offloading is considered.

\noindent Some interesting observations that follow from Fig.~\ref{fig:varying-traffic-demand} are: \\
(i) Under the optimal caching policy, "offloading on the edge" can significantly reduce the cost of content delivery, up to an order of magnitude, or even more in some cases. \\
(ii) The "offloading on the edge" cost changes over time much smoother than traffic demand. In particular, for large caches (cross/red line), the offloading cost curve is almost flat, despite the large peaks in traffic demand. In cellular networks, such temporal variations of the traffic intensity is an important issue, since operators are required to over-provision the network capacity (high CAPEX costs)~\cite{tube}. As we show, "offloading on the edge" can amortize these costs. Even under higher transmission costs $C_{BH}, C_{SC}$ than these we assumed, although the operating cost (OPEX) increases, the cost curve remains smooth, reducing thus a need for over-provisioning. \\
(iii) Large caching capacity has as a result a smoother cost curve (cross/red vs circle/blue curves). This is a positive message for operators, because to equip SCs with large enough caches is both feasible and inexpensive, as discussed in the example scenario of Section~\ref{sec:optimizing-total-cost}. \\
(iv) Comparing Fig.~\ref{fig:varying-traffic-demand-g05} and Fig.~\ref{fig:varying-traffic-demand-g01}, we see that the tolerated delay has also a significant effect on the smoothness of the cost curve (higher $TTL$ values lead to smaller variations). This implies that an alternative way of avoiding the over-provision cost (CAPEX), is to give incentives (OPEX) to users for accepting delayed content. Such solutions have been previously considered, e.g.~\cite{tube}, however, our framework allows an easy investigation of their effects (due to the closed-form results) and an analytic approach of pricing policies, etc.

\begin{figure}
\subfigure[$TTL = 25min$]{\includegraphics[width=0.49\linewidth]{./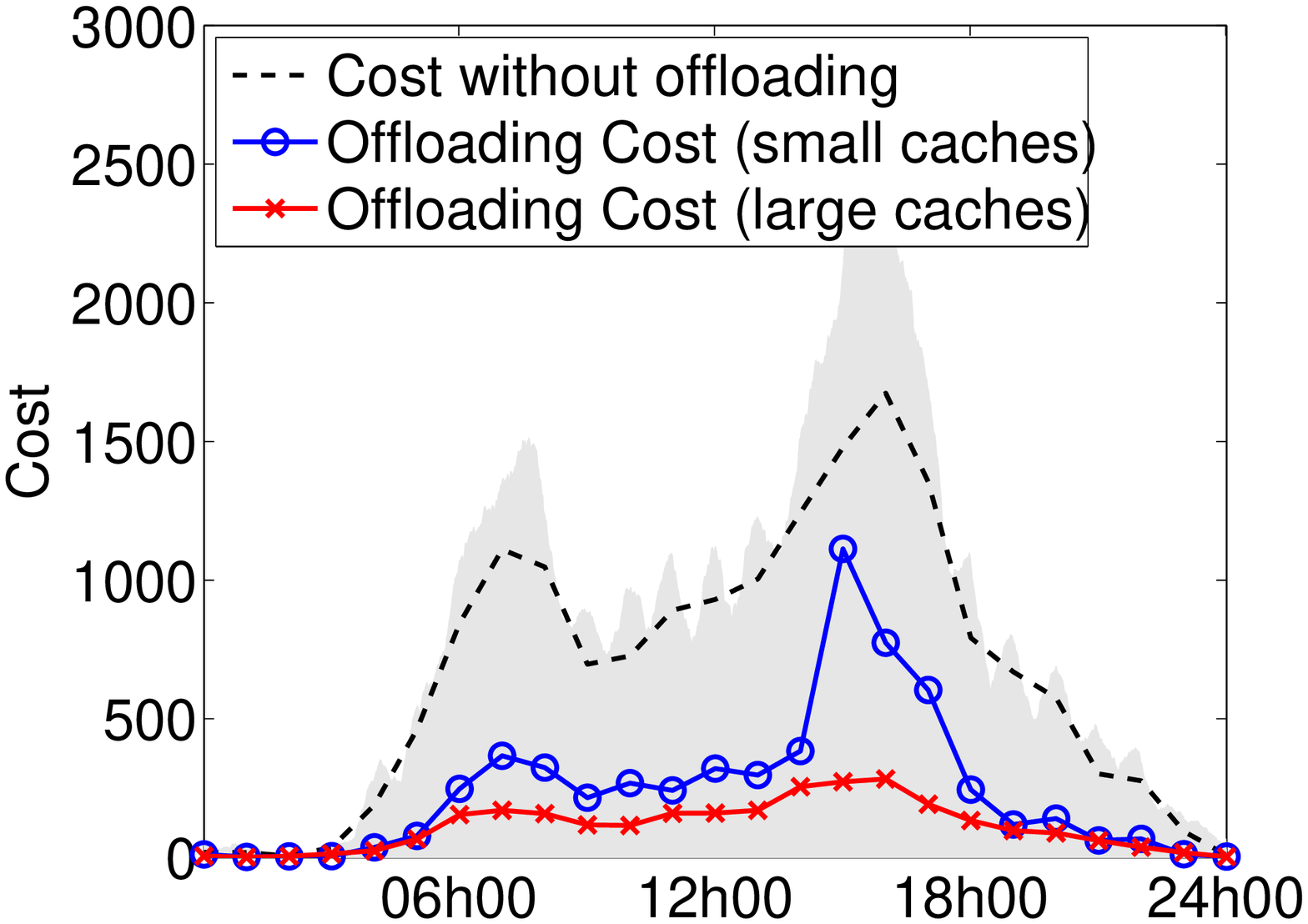}\label{fig:varying-traffic-demand-g05}}
\subfigure[$TTL = 5min$]{\includegraphics[width=0.49\linewidth]{./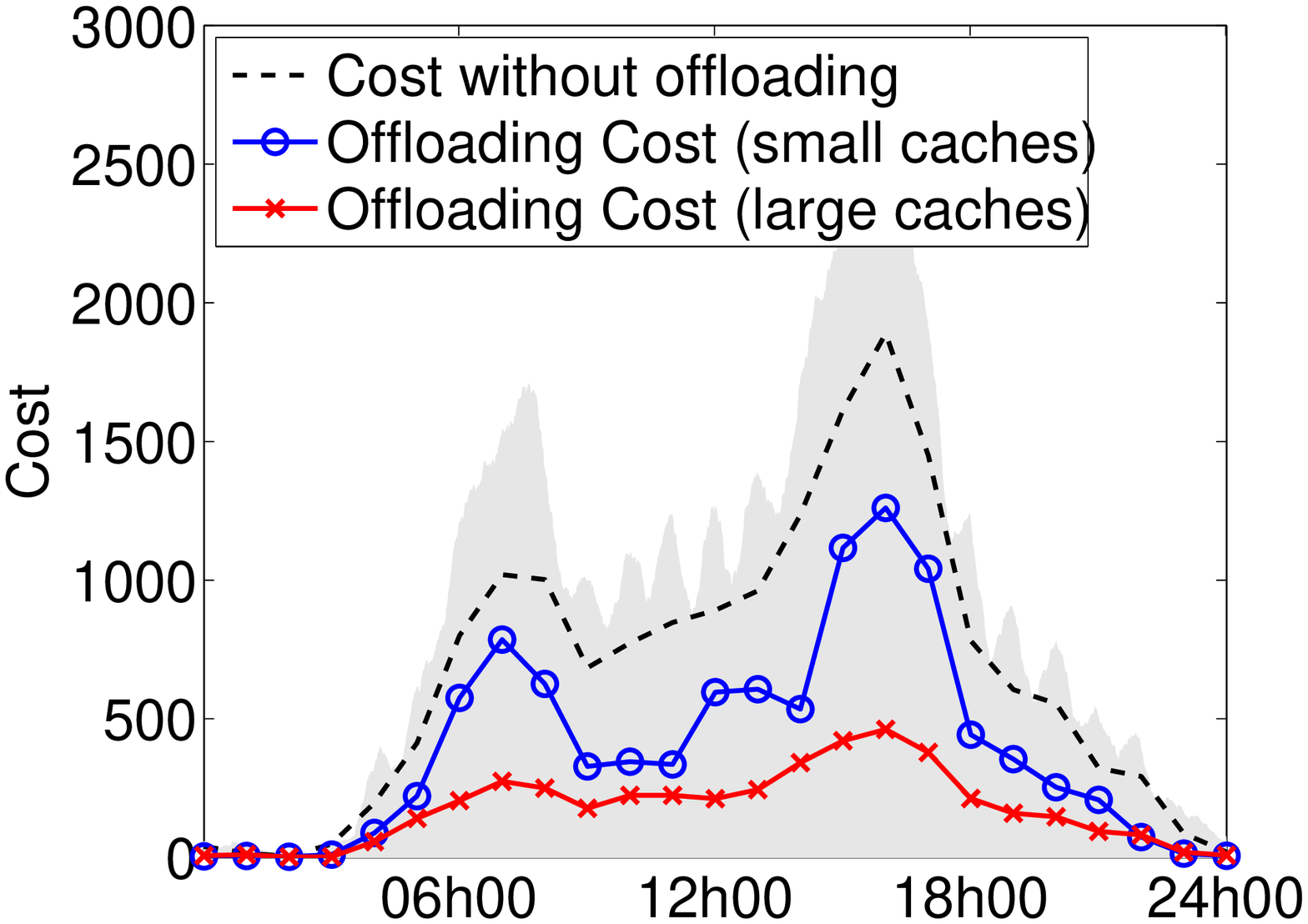}\label{fig:varying-traffic-demand-g01}}
\caption{Traffic demand and offloading cost over a $24h$ period.}
\label{fig:varying-traffic-demand}
\end{figure}

\subsection*{\underline{Offloading through MNs}} 
Now, we evaluate the performance of offloading through MNs. We simulate scenarios with different levels of node cooperation $p_{c}$. We choose the optimal initial content placement policy of Result~\ref{THM:OPTIMAL-H0-MN-MN}.

In Fig.~\ref{fig:varying-traffic-demand-MN-MN-cost-vs-time} we present the total offloading cost (marked lines) incurred for the cellular network operator over different times of the day. We simulate three scenarios with low, moderate and high delay tolerance ($TTL = 5,25,60min$), and $10\%$ of user cooperation in offloading ($p_{c}=0.1$). Similarly to the offloading through SCs case (see e.g. Fig.~\ref{fig:varying-traffic-demand}), for higher $TTL$ values, the cost decreases and its variations are smoother. However, it can be seen that improvement between the scenarios with $TTL=25min$ and $TTL=60min$ is not significant. This has an important implication for the system: Although increasing the delay tolerance is beneficial for the operator, after a point or gradually (depending on the scenario), the effects of this improvement become negligible. Bearing in mind that user satisfaction decreases with $TTL$ indicates that there is a tradeoff, which should be carefully assessed by the system designer or considered for further optimization.

In Fig.~\ref{fig:varying-traffic-demand-MN-MN-cost-vs-time} we show how the total offloading cost over a day period (normalized to the respective cost without offloading) changes with $p_{c}$. It is evident that varying user cooperation does not have the same effects for different scenarios, and that the minimum total cost is achieved at different values of $p_{c}$. This introduces one extra dimension, which can be used for system optimization as well. Such optimization options (with respect to $TTL$, $p_{c}$, etc.) could lead to interesting conclusions, we intend to consider them in future research.

\begin{figure}
\subfigure[Offloading Cost vs. Time]{\includegraphics[width=0.49\linewidth]{./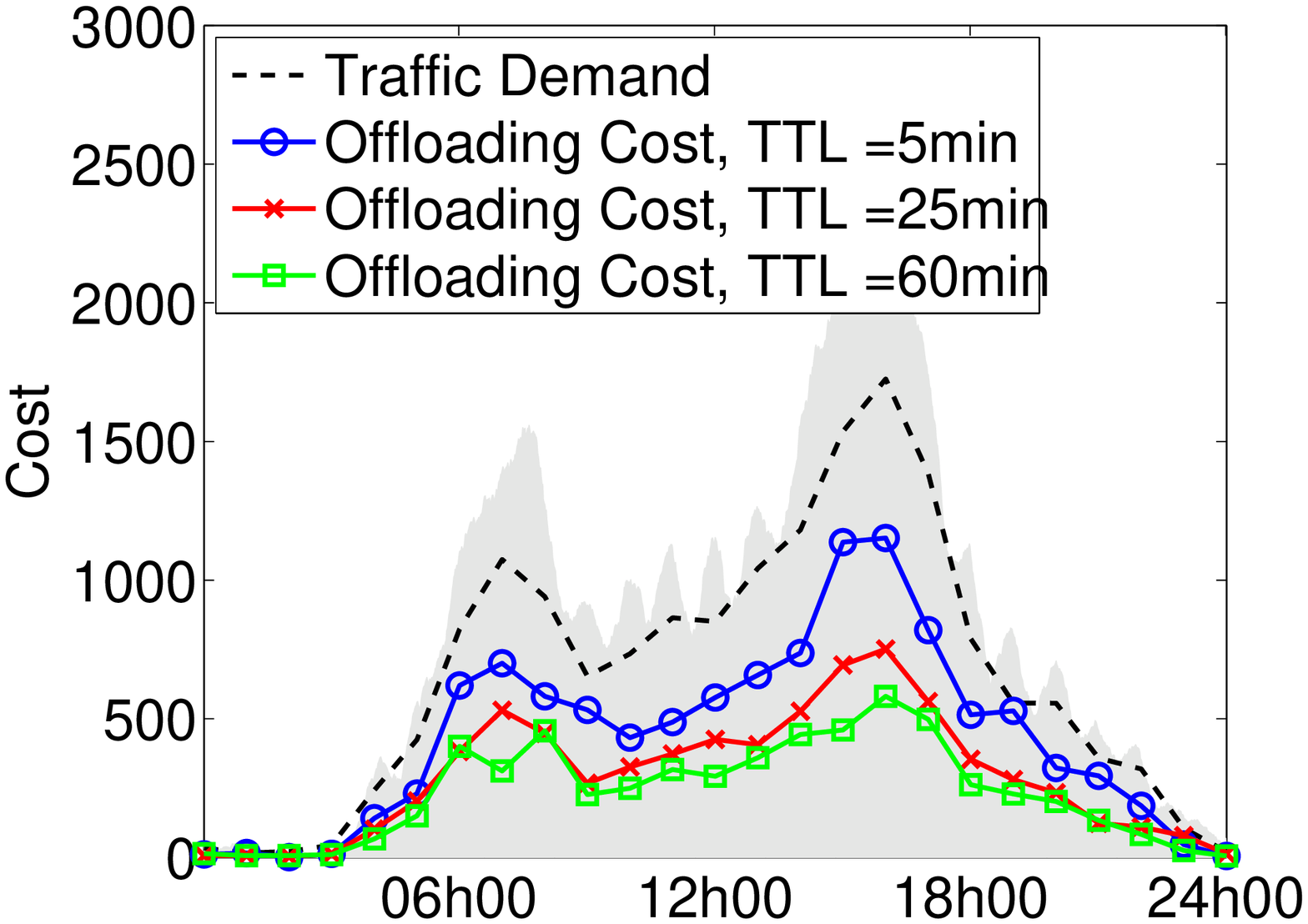}\label{fig:varying-traffic-demand-MN-MN-cost-vs-time}}
\subfigure[Offloading Cost vs. $p_{c}$]{\includegraphics[width=0.49\linewidth]{./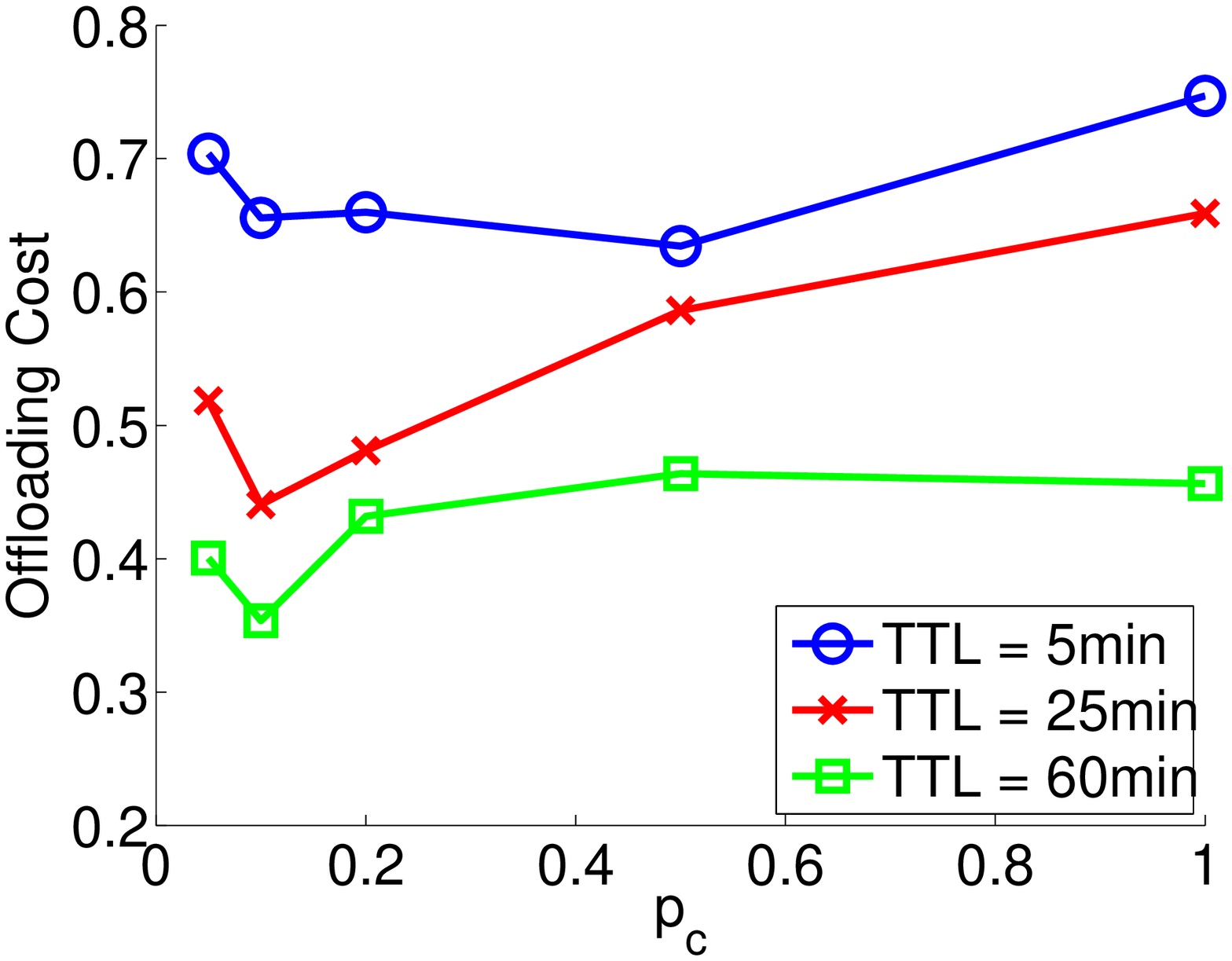}\label{fig:varying-traffic-demand-MN-MN-cost-vs-pc}}
\caption{(a) Traffic demand and offloading cost over a $24h$ period. User cooperation is $10\%$. (b) Total offloading cost over a $24h$ period, normalized to the total cost without offloading.}
\label{fig:varying-traffic-demand-MN-MN}
\end{figure}

\section{Discussion and Extensions}\label{sec:discussion}
In this section we discuss some important issues related to assumptions made implicitly or explicitly throughout our analysis and how they can be extended or removed to make our framework more generic.

\subsection{Heterogeneous Mobility}
The mobility model we use allows heterogeneous meeting rates $\lambda_{ij}$ (Assumption~\ref{ass:heterogeneous-mobility}) in order to account for various node mobility patterns and communication ranges. However, applying the approximation of \eq{eq:mean-value-approximation} (proof of Lemma~\ref{thm:ODEs})\footnote{In \eq{eq:mean-value-approximation} the meeting rates of a sum are approximated with their mean value $\mu_{\lambda}$. As discussed earlier, this approximation becomes more accurate when the heterogeneity of the meeting rates (i.e. the variance of $f_{\lambda}$) decreases.}, leads to considering only the mean value of the meeting rates $\mu_{\lambda}$ in the analytical results. Although the same expressions could have been derived (easier) by using a homogeneous model, i.e. $\forall i,j:\lambda_{ij} = \mu_{\lambda}$, our main motivation for considering heterogeneous rates is the following: We can easily incorporate further heterogeneous social characteristics (related to mobility patterns), like (i) social selfishness (where each node cooperation is related to their social ties)~\cite{pavlos-social-selfishness}, or (ii) smart, mobility-aware content placement algorithms (where "better" holders are selected)~\cite{pavlos-not-all-content}. These characteristics could not have been taken into account under a homogeneous mobility assumption. 

To extend our results for the two above cases, it is just needed to modify the analytic expressions by substituting the average meeting rate $\mu_{\lambda}$ with the effective average meeting rate $\mu_{\lambda}^{(eff.)}$ (see \cite[Lemmas~3.1 and~3.2]{pavlos-social-selfishness} and \cite[Result~4]{pavlos-not-all-content}, respectively), which is given by
\begin{equation}\label{eq:effective-mean-value}
 \mu_{\lambda}^{(eff.)} = E[\lambda\cdot p(\lambda)]~~~\text{and}~~~\mu_{\lambda}^{(eff.)} = \frac{E[\lambda\cdot \pi(\lambda)]}{E[\pi(\lambda)]}
\end{equation}
where the function $p(\lambda)$ describes the social selfishness and $\pi(\lambda)$ the mobility-aware content placement algorithm. The expectations in \eq{eq:effective-mean-value} are taken over the meeting rates distribution $f_{\lambda}$, and, as a result, the mobility heterogeneity is actively involved in the performance prediction expressions.

\subsection{Cost Model}
In our model we considered constant costs (for the different transmission types). However, in different scenarios, there might exist some correlation between transmission costs and other system parameters. Some examples could be: (i) The delayed content delivery cost $C_{BS}^{(TTL)}$ might be a function of $TTL$ (e.g. increasing with $TTL$); (ii) If multicasting is used for initial content placement to MNs, the transmission cost $C_{BS}$ might not be \textit{linearly} related to $H_{MN}(0)$ (i.e. the cost of multicasting a content to $H_{MN}(0)$ nodes, might not be equal to $H_{MN}(0)$ times the cost of a unicast transmission); (iii) The MN-MN transmission cost $C_{D2D}$ might be related to the cooperation probability $p_{c}$, e.g. the willingness of nodes to participate in offloading (which is captured by $p_{c}$) might be higher when the reward for each offloaded content (which is captured by $C_{D2D}$) increases.

Our results predicting the content dissemination performance and cost (Section~\ref{sec:analysis-single-content}) hold also when adopting such more generic cost models. What changes is the cost optimization problem (Problem~\ref{eq:optimization-problem-multi}), which might need to be reformulated. However, even under the above examples (as well as a number of other cost models), Problem~\ref{eq:optimization-problem-multi} can be still expressed in a closed form expression and thus be solved with known (analytic or numerical) methods. 

Summarizing, since till now there is no common technology or cost policies applied in offloading systems, we prefered to assume a simple cost model with constant costs. Nevertheless, as discussed above, our results can be extended for other cases as well.

\subsection{Content Dissemination}
The assumptions for the content dissemination (e.g. non-increasing number of requesters, only initial placement to SCs, etc.) can be extended and, thus, more generic scenarios can be captured and analysed in a similar manner. In the remainder, we demonstrate how our model and analysis is extended to include two extra characteristics, namely (i) content discards and (ii) bulk arrivals/departures of requesters.

First, we extend to cases where contents can be discarded by edge nodes before their expiry time ($TTL$). For example, a MN whose battery level decreases, might stop storing and forwarding contents in order to save energy. As a second example, when a MN's cache is full and a new content is received, the MN needs to drop either one of the stored contents or the one just received. On the other hand, the state of the SCs' caches will be probably known by the cellular network and, thus, such cases can be avoided or controlled. To be able to analyze such content discards with a Markovian framework, we use the following model
\begin{assumption}[Content Dropping]\label{ass:dropping} A MN drops a cached content with rate $\lambda_{d}\geq0$. The content dropping process is Poisson and $\lambda_{d}$ is equal among all holders and all contents.
\end{assumption}
Under the above assumption, the ODE of \eq{eq:ODE-h(t)} can be turned into
\begin{equation}\label{eq:ODE-h(t)-generic}
\frac{dH(t)}{dt} = p_{c}\cdot H(t)\cdot R(t)\cdot \mu_{\lambda} - (H(t)-H_{SC}(0))\cdot \lambda_{d}
\end{equation}
where the last term corresponds to the content discards by the MN-holders.

Second, we assume that at times $\{\tau_{1}, \tau_{2}, ...\}$ either a number of new MNs get interested in a content (i.e. new requesters enter the system) or some of the existing requesters lose their interest in the content (i.e. they leave the system). We denote the numbers of such arriving/departing requesters at times $\{\tau_{1}, \tau_{2}, ...\}$ as $\{R_{\tau_{1}}, R_{\tau_{2}}, ...\}$, where a value $R_{\tau}$ can be positive (denoting arrivals) or negative (denoting departures). As a result, the ODE for the number of requesters $R(t)$ over time (see \eq{eq:ODE-r(t)}) is given now by
\begin{equation}\label{eq:ODE-r(t)-generic}
\frac{dR(t)}{dt} = -H(t)\cdot R(t)\cdot \mu_{\lambda} + \sum_{\tau}R_{\tau}\cdot\delta(t-\tau)
\end{equation}
where $\delta(\cdot)$ is the \textit{Dirac delta function}.

Finally, solving the system of the ODEs of \eq{eq:ODE-h(t)-generic} and \eq{eq:ODE-r(t)-generic}, gives the solution for the deterministic approximations for $H(t)$ and $R(t)$ for a generic scenario with content discards and bulk arrivals/departures of requesters.

\section{Related Work}\label{sec:related}
In this section we discuss works that are closer to ours, rather than studies which do not consider caching and/or delay tolerant delivery, and which are mainly based on pure infrastructure architectures, e.g. with WiFi access points~\cite{Offloading-Wifi} or small-cell base stations~\cite{femtocell-survey,HetNets-paradigm}, or on the D2D paradigm~\cite{survey-d2d}.

Mobile data offloading through opportunistic communications and epidemic content dissemination is studied in~\cite{offloading-wowmom11,offloading-control-theory,fluid-limit-mass2012,offloading-double-opportunities}. In the setting of~\cite{offloading-wowmom11}, copies of a content are distributed through the infrastructure to a subset of mobile nodes, which then start propagating them epidemically. The performance of different content ``pushing'' techniques (e.g. slow/fast start) is investigated through simulations on a real vehicular mobility trace. Analytical approaches for pushing techniques can be found in~\cite{offloading-control-theory,fluid-limit-mass2012}, which study the optimal selection of the number of initial and final content pushes. \cite{offloading-control-theory} models the content dissemination as a control system and proposes an adaptive algorithm, \textit{HYPE}, which aims to minimize the load of the cellular network by using real time measurements. On the other hand,~\cite{fluid-limit-mass2012} uses a fluid limit approximation and focuses on the cost optimization problem. Finally,~\cite{offloading-double-opportunities} takes into account fairness among different contents/nodes, and derives schedulers that maximize the throughput, under given mobility and wireless channel conditions. These studies, in contrast to our framework, assume that \textit{every} user is willing to offload contents, even if they are \textit{not of her interest}. Difficulties in devising incentive mechanisms or limitations of device capabilities, might render such settings unrealistic.

To this end,~\cite{multiple-offloading,pavlos-not-all-content} consider a limited number of (designated) holders. \cite{multiple-offloading} proposes centralized algorithms for selecting the best set of available holders, in order to minimize the traffic load served by the infrastructure. In a different approach,~\cite{pavlos-not-all-content} focuses on the effects of \textit{popularity} (number of requesters) and \textit{availability} (number of holders) on the performance of content delivery. Our paper extends these works, by introducing generic offloading costs and policies, and deriving insightful, closed-form results for the optimal~caching.

Finally,~\cite{femtocaching-magazine} proposes caching in femto-cells and user devices, in a different setting than ours, where users communicate with several holders simultaneously. D2D communication is controlled by a macro-cell BS, which is aware of the status of caches, location of users, and channel state information between them. The objective of the paper is to decide which files should be stored and on which helper node, a problem that is shown to be \textit{NP-hard}. This problem is formally presented, studied in more detail, and extended for coded contents in~\cite{femtocaching}.

\section{Conclusion}
In this work we studied ``offloading on the edge'', a mechanism that employs edge nodes (SCs and/or MNs) to opportunistically offload popular content. We built a model that can capture heterogeneous traffic demand, user cooperation and mobility characteristics, and describe generic caching and offloading policies. Based on our model, we derived closed-form expressions for predicting the offloading performance. These allowed us to analytically study the cost optimization problem, and provide results that shed light on how caching policies should be designed. Realistic simulations verified~the insights that stem from our analysis, and led to useful conclusions.

Our closed-form expressions reveal how and to what extent each system parameter affects performance and cost. Thus, they could be easily applied to sensitivity analysis, network planning and dimensioning, or design of pricing strategies; issues that have recently attracted a lot of attention from network operators, who seek novel solutions to alleviate the effects of the rapidly growing traffic demand.

\bibliographystyle{IEEEtran}

\appendix
\subsection{Proof of Result~\ref{result:expected-delay-single}}\label{appendix:proof-delay}
\begin{proof}
The probability a content to be delivered in the time interval $[t,t+dt)$ is given by
\begin{equation}\label{eq:P-Td-equal-t}
P\{T_{d}\in[t,t+dt)\}= \frac{dP\{T_{d}\leq t\}}{dt}\cdot dt 
\end{equation}

Since a requester gets the content at time $t=TTL$ from a BS, if it has not received it earlier, we can write for the expected delay
\begin{multline}\label{eq:expectation-conditional-generic}
E[T_{i}|TTL] =TTL\cdot (1-P\{T_{d}\leq TTL\}) \\+ \int_{0}^{TTL}t\cdot P\{T_{d}\in[t,t+dt)\}\\
			 = TTL\cdot (1-P\{T_{d}\leq TTL\}) + \int_{0}^{TTL}t\cdot \frac{dP\{T_{d}\leq t\}}{dt} \cdot dt
\end{multline}
where the last equality follows from \eq{eq:P-Td-equal-t}.

Using the expression of Result~\ref{result:delivery-probability-single}, we first compute the derivative $\frac{dP\{T_{d}\leq t\}}{dt}$, and, then, the integral in \eq{eq:expectation-conditional-generic}, and we get
\begin{multline*}
E[T_{i}|TTL] 
= 	TTL\cdot (1-P\{T_{d}\leq TTL\}) \\
	+ \frac{1}{p_{c}\cdot R_{0}}\cdot 
\left(\frac{TTL\cdot H_{0}\cdot (p_{c}\cdot R_{0}+H_{0})\cdot e^{\mu_{\lambda}\cdot(p_{c}\cdot R_{0}+H_{0})\cdot TTL }}{p_{c}\cdot R_{0}+H_{0}\cdot e^{\mu_{\lambda}\cdot(p_{c}\cdot R_{0}+H_{0})\cdot TTL }} \right) 	\\
	+ \frac{1}{\mu_{\lambda}\cdot p_{c}\cdot R_{0}}\cdot \ln\left(\frac{p_{c}\cdot R_{0}+H_{0}}{p_{c}\cdot R_{0}+H_{0}\cdot e^{\mu_{\lambda}\cdot(p_{c}\cdot R_{0}+H_{0})\cdot TTL }}\right)
\end{multline*}
Substituting the value of $P\{T_{d}\leq TTL\}$ from Result~\ref{result:delivery-probability-single} in the above equation, after some algebraic manipulations, we can successively get
\begin{multline*}
E[T_{i}|TTL] 
= 	\frac{TTL\cdot (p_{c}\cdot R_{0}+H_{0})}{p_{c}\cdot R_{0}}  	\\
	+ \frac{1}{\mu_{\lambda}\cdot p_{c}\cdot R_{0}}\cdot \ln\left(\frac{p_{c}\cdot R_{0}+H_{0}}{p_{c}\cdot R_{0}+H_{0}\cdot e^{\mu_{\lambda}\cdot(p_{c}\cdot R_{0}+H_{0})\cdot TTL }}\right)\\
	=\frac{1}{\mu_{\lambda}\cdot p_{c}\cdot R_{0}}\cdot \ln\left(\frac{(p_{c}\cdot R_{0}+H_{0})\cdot e^{\mu_{\lambda}\cdot(p_{c}\cdot R_{0}+H_{0})\cdot TTL }}{p_{c}\cdot R_{0}+H_{0}\cdot e^{\mu_{\lambda}\cdot(p_{c}\cdot R_{0}+H_{0})\cdot TTL }}\right)\\
	=\frac{1}{\mu_{\lambda}\cdot p_{c}\cdot R_{0}}\cdot \ln\left(1+\frac{p_{c}\cdot R_{0}- e^{-\mu_{\lambda}\cdot(p_{c}\cdot R_{0}+H_{0})\cdot TTL }}{H_{0}+p_{c}\cdot R_{0}\cdot e^{-\mu_{\lambda}\cdot(p_{c}\cdot R_{0}+H_{0})\cdot TTL }}\right)
\end{multline*}
which is the expression of Result~\ref{result:expected-delay-single} for $p_{c}>0$. The expression for $p_{c}=0$ follows after taking the limit ($p_{c}\rightarrow0$) of the above expression.
\end{proof}

\subsection{Lemma~\ref{thm:monotonicity}: Cost Monotonicity with $\lambda_{0}$}\label{appendix:lemma-monotonicity}
\begin{lemma}\label{thm:monotonicity}
Under a content placement policy given by \eq{eq:optimal-Ho}, the derivative of the total cost, $\sum_{\theta\in\mathcal{M}}C^{\theta}$, with respect to $\lambda_{0}$ is
\begin{equation*}\label{eq:derivative-cost-final}
\frac{d}{d\lambda_{0}}\left[ \sum_{\theta\in\mathcal{M}}C^{\theta}\right] = \frac{1}{\gamma}\cdot\left(1-\frac{1}{1+\frac{\lambda_{0}}{\Phi_{1}}}\right)\cdot |A|	\geq 0
\end{equation*}
where $\mathcal{A} = \lbrace\theta\in\mathcal{M}:L\leq R_{0}^{\theta}\leq U\rbrace$.
\end{lemma}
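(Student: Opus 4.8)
The plan is to differentiate the total cost term by term, exploiting the fact that only the contents in $\mathcal{A}$ have a placement $H_0^{\theta}$ that actually varies with $\lambda_0$. First I would rewrite each single-content cost in the base scenario, using \eq{eq:cost-single-for-optimization}, in the compact form $C^{\theta} = C_{BH}\cdot H_0^{\theta} + C_{SC}\cdot R_0^{\theta} + C_{BH}\cdot\Phi\cdot R_0^{\theta}\cdot e^{-\gamma\cdot H_0^{\theta}}$ with $\gamma = \mu_{\lambda}\cdot TTL$ and $\Phi = \tfrac{C_{BS}^{(TTL)}-C_{SC}}{C_{BH}}$, and then split $\mathcal{M}$ according to the three regimes of \eq{eq:optimal-Ho}. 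For $R_0^{\theta} > U$ we have $H_0^{\theta} = N_{SC}$ and for $R_0^{\theta} < L$ we have $H_0^{\theta} = 0$; in both cases $H_0^{\theta}$ does not depend on $\lambda_0$, so these contents contribute nothing to $\tfrac{d}{d\lambda_0}\sum_{\theta}C^{\theta}$. Hence only the index set $\mathcal{A} = \lbrace\theta : L \le R_0^{\theta} \le U\rbrace$ matters (here $\Phi_1 \equiv C_{BH}$).

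For $\theta\in\mathcal{A}$ I would apply the chain rule, $\tfrac{dC^{\theta}}{d\lambda_0} = \tfrac{dC^{\theta}}{dH_0^{\theta}}\cdot\tfrac{dH_0^{\theta}}{d\lambda_0}$. From \eq{eq:optimal-Ho}, $H_0^{\theta} = \tfrac{1}{\gamma}\bigl[\ln(\gamma\cdot\Phi\cdot R_0^{\theta}) - \ln(1+\lambda_0/C_{BH})\bigr]$, which gives $\tfrac{dH_0^{\theta}}{d\lambda_0} = -\tfrac{1}{\gamma\cdot(C_{BH}+\lambda_0)}$. The key simplification is the other factor: $\tfrac{dC^{\theta}}{dH_0^{\theta}} = C_{BH}\bigl(1 - \gamma\cdot\Phi\cdot R_0^{\theta}\cdot e^{-\gamma\cdot H_0^{\theta}}\bigr)$, and substituting the stationarity relation $e^{\gamma\cdot H_0^{\theta}} = \gamma\cdot\Phi\cdot R_0^{\theta}/(1+\lambda_0/C_{BH})$ (equivalently \eq{eq:optimal-Ho-langragian} with $\lambda_{\theta}=\mu_{\theta}=0$) collapses this to $\tfrac{dC^{\theta}}{dH_0^{\theta}} = -\lambda_0$ — which is just the Lagrangian first-order condition, a useful sanity check. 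Multiplying the two factors yields $\tfrac{dC^{\theta}}{d\lambda_0} = \tfrac{\lambda_0}{\gamma\cdot(C_{BH}+\lambda_0)} = \tfrac{1}{\gamma}\bigl(1 - \tfrac{1}{1+\lambda_0/C_{BH}}\bigr)$, and summing over the $|\mathcal{A}|$ contributing contents gives the stated formula; non-negativity is immediate since $\lambda_0 \ge 0$ forces $1 + \lambda_0/C_{BH} \ge 1$.

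The one point that needs care is that $\mathcal{A}$ and the thresholds $L,U$ themselves move with $\lambda_0$: as $\lambda_0$ grows, both $L$ and $U$ increase (see \eqs{eqs:L-U}), so a content can cross from the interior regime into the $H_0^{\theta}=0$ regime (when $L$ passes $R_0^{\theta}$) or from the $H_0^{\theta}=N_{SC}$ regime into the interior one (when $U$ passes $R_0^{\theta}$). I would handle this by noting that at each such crossing the interior expression for $H_0^{\theta}$ equals exactly $0$ (resp. $N_{SC}$) by the very definition of $L$ (resp. $U$), so $H_0^{\theta}$, and hence $C^{\theta}$, is continuous in $\lambda_0$; the derivative formula above then holds on each of the finitely many intervals on which $\mathcal{A}$ is constant, and a continuous function with non-negative one-sided derivatives is non-decreasing, which is all that Result~\ref{THM:OPTIMAL-H0-NO-COOP-MULTIPLE} needs. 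I expect this bookkeeping of the regime boundaries to be the only real obstacle; the algebra reducing $\tfrac{dC^{\theta}}{dH_0^{\theta}}$ to $-\lambda_0$ is short once the stationarity relation is plugged in.
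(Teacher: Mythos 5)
Your proof is correct, and it reaches the stated formula by a genuinely different and substantially shorter route than the paper. The paper's proof keeps all three regimes in play: it writes $H_{0}^{\theta}$ uniformly in terms of the multipliers $\lambda_{\theta},\mu_{\theta}$, expresses each $C^{\theta}$ as a function of $\lambda_{0}-\lambda_{\theta}+\mu_{\theta}$, and then differentiates sums over the $\lambda_{0}$-dependent sets $\mathcal{A},\mathcal{B},\mathcal{C}$ from the definition of the derivative, approximating the boundary contributions via the popularity density $\rho(x)$ (terms like $\frac{d|\mathcal{A}|}{d\lambda_{0}}\approx \frac{M}{\gamma C_{BH}\Phi}(p(U)e^{\gamma N_{SC}}-p(L))$) and verifying that they all cancel. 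You instead observe up front that contents with $R_{0}^{\theta}>U$ or $R_{0}^{\theta}<L$ have $H_{0}^{\theta}$ pinned at $N_{SC}$ or $0$ and hence contribute nothing, and for $\theta\in\mathcal{A}$ you use the chain rule together with the stationarity identity $\gamma\Phi R_{0}^{\theta}e^{-\gamma H_{0}^{\theta}}=1+\lambda_{0}/C_{BH}$ to collapse $\frac{dC^{\theta}}{dH_{0}^{\theta}}$ to $-\lambda_{0}$, giving $\frac{dC^{\theta}}{d\lambda_{0}}=\frac{\lambda_{0}}{\gamma(C_{BH}+\lambda_{0})}$ per content; the algebra checks out and matches the claimed expression with $\Phi_{1}=C_{BH}$. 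Your handling of the moving thresholds --- continuity of $H_{0}^{\theta}$ at the crossings by the very definitions of $L$ and $U$, the derivative formula holding on each interval where $\mathcal{A}$ is constant, and monotonicity following from non-negative one-sided derivatives of a continuous function --- is actually cleaner and more rigorous for a finite content set than the paper's density-based bookkeeping, which implicitly assumes a continuum of contents and uses first-order approximations of integrals. What the paper's approach buys in exchange is an explicit accounting of how $|\mathcal{A}|$, $|\mathcal{B}|$, $|\mathcal{C}|$ drift with $\lambda_{0}$ (useful for Corollary~\ref{thm:corollary}); what yours buys is a two-line derivation of the same derivative with the Lagrangian interpretation $\frac{dC^{\theta}}{dH_{0}^{\theta}}=-\lambda_{0}$ made transparent. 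No gap.
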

\begin{proof}
From the conditions (b) and (c) (see, proof of Result~\ref{THM:OPTIMAL-H0-NO-COOP-MULTIPLE}), and similarly to \eqs{eqs:L-U}, we can express the multipliers $\lambda_{\theta}$ and $\mu_{\theta}$ as a function of $\lambda_{0}$, as
\begin{subequations}\label{eqs:langragian-parameters}
 \begin{align}
 \lambda_{\theta} &= \left\{
 \begin{array}{lc}
  \lambda_{0}+C_{BH}\left(1-\gamma\cdot \Phi\cdot R_{0}^{\theta}\right)	&, R_{0}^{\theta}<L\\
  0		&, R_{0}^{\theta}\geq L
 \end{array}
\right. \\
 \mu_{\theta} &= \left\{
 \begin{array}{lc}
  -\lambda_{0}-C_{BH}\left(1-\gamma\cdot\Phi\cdot e^{-\gamma\cdot N_{SC}} R_{0}^{\theta}\right)	& , R_{0}^{\theta}>U\\
  0		& , R_{0}^{\theta}\leq U
 \end{array}
\right. 
\end{align}
\end{subequations}

The cost of a single content dissemination, \eq{eq:cost-single-for-optimization}, under the content placement policy of \eq{eq:optimal-Ho}, can be written as
\begin{align*}
C^{\theta} &= \frac{\Phi_{1}}{\gamma}\cdot\left[\ln\left(\gamma\cdot\Phi\cdot R_{0}^{\theta}\right)-\ln\left(1+\frac{\lambda_{0}-\lambda_{\theta}+\mu_{\theta}}{\Phi_{1}} \right) \right ] \\
	&+ \Phi_{2}\cdot R_{0}^{\theta}\\
	&+(\Phi_{3}-\Phi_{2})\cdot R_{0}^{\theta}\cdot\frac{1}{\gamma\cdot\Phi\cdot R_{0}^{\theta}} \cdot \left(1+\frac{\lambda_{0}-\lambda_{\theta}+\mu_{\theta}}{\Phi_{1}} \right) \\
&= \frac{\Phi_{1}}{\gamma}\cdot\left[\ln\left(\gamma\cdot\Phi\cdot R_{0}^{\theta}\right)-\ln\left(1+\frac{\lambda_{0}-\lambda_{\theta}+\mu_{\theta}}{\Phi_{1}} \right) \right ] \\
	&+ \Phi_{2}\cdot R_{0}^{\theta}+\frac{\Phi_{1}}{\gamma}\cdot \left(1+\frac{\lambda_{0}-\lambda_{\theta}+\mu_{\theta}}{\Phi_{1}} \right)\numberthis
\end{align*}
Taking its derivative, with respect to $\lambda_{0}$, gives
\begin{align*}\label{eq:derivative-cost-lambda}
 \frac{d}{d\lambda_{0}}\left[ \sum_{\theta\in\mathcal{M}}C^{\theta}\right] 
	&=-\frac{\Phi_{1}}{\gamma}\cdot\frac{d}{d\lambda_{0}}\left[\sum_{\theta\in\mathcal{M}}\ln\left(1+\frac{\lambda_{0}-\lambda_{\theta}+\mu_{\theta}}{\Phi_{1}} \right ) \right ] \\
	&+ \frac{1}{\gamma}\cdot\frac{d}{d\lambda_{0}} \left[\sum_{\theta\in\mathcal{M}}\left( \lambda_{0}-\lambda_{\theta}+\mu_{\theta}\right )\right]\numberthis
\end{align*}
because the terms including only the scenario parameters ($R_{0}^{\theta}$, $\gamma$, and costs) do not depend on the selected resource allocation and, thus, on the parameter $\lambda_{0}$.

To calculate the derivatives appearing in the right side of \eq{eq:derivative-cost-lambda}, we use the definition of a derivative, i.e.
\begin{equation}
 \frac{df(\lambda_{0})}{d\lambda_{0}} = \lim_{d\lambda_{0}\rightarrow 0}\frac{f(\lambda_{0}+d\lambda_{0})-f(\lambda_{0})}{d\lambda_{0}}
\end{equation}
and proceed as following:

We first define the sets
\begin{subequations}
 \begin{align}
 \mathcal{A} &= \lbrace\theta\in\mathcal{M}:L\leq R_{0}^{\theta}\leq U\rbrace\\
\mathcal{B} &= \lbrace\theta\in\mathcal{M}: R_{0}^{\theta}> U\rbrace\\
\mathcal{C} &= \lbrace\theta\in\mathcal{M}: R_{0}^{\theta}< L\rbrace
\end{align}
\end{subequations}
and, respectively, for $\lambda_{0}\rightarrow \lambda_{0}+d\lambda_{0}$, the sets
\begin{subequations}
 \begin{align}
 \mathcal{A}^{'} &= \lbrace\theta\in\mathcal{M}:L+\Delta L \leq R_{0}^{\theta}\leq +\Delta U\rbrace\\
\mathcal{B}^{'} &= \lbrace\theta\in\mathcal{M}: R_{0}^{\theta}> U+\Delta U \rbrace\\
\mathcal{C}^{'} &= \lbrace\theta\in\mathcal{M}: R_{0}^{\theta}< L+\Delta L \rbrace
\end{align}
\end{subequations}
where we denoted
\begin{subequations}\label{eqs:delta-L-U-definitions}
 \begin{align}
L+ \Delta L &= \frac{1}{\gamma\cdot\Phi}\cdot \left(1+\frac{\lambda_{0}+d\lambda_{0}}{C_{BH}}\right) = L + \frac{d\lambda_{0}}{\gamma\cdot C_{BH}\cdot \Phi}\\
U+ \Delta U&= \frac{1}{\gamma\cdot\Phi}\cdot e^{\gamma\cdot N_{SC}}\cdot \left(1+\frac{\lambda_{0}+d\lambda_{0}}{C_{BH}}\right) \nonumber\\
	   = U + &\frac{d\lambda_{0}}{\gamma\cdot C_{BH}\cdot \Phi}\cdot e^{\gamma\cdot N_{SC}} = (L+\Delta L)\cdot e^{\gamma\cdot N_{SC}}
\end{align}
\end{subequations}

Regarding the first derivative term in \eq{eq:derivative-cost-lambda}, we proceed as following
\begin{align*}\label{eq:derivative-sum-1}
\frac{d}{d\lambda_{0}}&\left[\sum_{\theta\in\mathcal{M}}\ln\left(1+\frac{\lambda_{0}-\lambda_{\theta}+\mu_{\theta}}{C_{BH}} \right ) \right ] \\
\stackrel{\text{\eqs{eqs:langragian-parameters}}}{=} 
	&\frac{d}{d\lambda_{0}}\left[\sum_{\theta\in\mathcal{A}}\ln\left(1+\frac{\lambda_{0}}{C_{BH}} \right) \right]\\ 
	&+ \frac{d}{d\lambda_{0}}\left[\sum_{\theta\in\mathcal{B}}\left(\ln\left(\gamma\cdot\Phi\cdot R_{0}^{\theta} \right)-\gamma\cdot N_{SC}\right)\right]\\
	&+ \frac{d}{d\lambda_{0}}\left[\sum_{\theta\in\mathcal{C}}\ln\left(\gamma\cdot\Phi\cdot R_{0}^{\theta} \right) \right]\\
=
&\frac{d}{d\lambda_{0}}\left[|\mathcal{A}|\ln\left(1+\frac{\lambda_{0}}{C_{BH}}\right) \right]\\
&+ \frac{d}{d\lambda_{0}}\left[\sum_{\theta\in\mathcal{B}}\ln\left(\gamma\cdot\Phi\cdot R_{0}^{\theta} \right) \right]-\gamma\cdot N_{SC}\cdot \frac{d|\mathcal{B}|}{d\lambda_{0}}\\
&+ \frac{d}{d\lambda_{0}}\left[\sum_{\theta\in\mathcal{C}}\ln\left(\gamma\cdot\Phi\cdot R_{0}^{\theta} \right) \right]\\
=
&|\mathcal{A}|\cdot \frac{1}{C_{BH}}\cdot \frac{1}{1+\frac{\lambda_{0}}{C_{BH}}}+\ln\left(1+\frac{\lambda_{0}}{C_{BH}}\right)\cdot \frac{d|\mathcal{A}|}{d\lambda_{0}}\\
&+ \frac{d}{d\lambda_{0}}\left[\sum_{\theta\in\mathcal{B}}\ln\left(\gamma\cdot\Phi\cdot R_{0}^{\theta} \right) \right]-\gamma\cdot N_{SC}\cdot \frac{d|\mathcal{B}|}{d\lambda_{0}}\\
&+ \frac{d}{d\lambda_{0}}\left[\sum_{\theta\in\mathcal{C}}\ln\left(\gamma\cdot\Phi\cdot R_{0}^{\theta} \right) \right] \numberthis
\end{align*}

The derivatives in the above sum are calculated as following
\begin{subequations}\label{eqs:derivatives-for-sum-1}
\begin{align*}
 \frac{d|\mathcal{A}|}{d\lambda_{0}} &= \frac{|\mathcal{A}^{'}|-|\mathcal{A}|}{d\lambda_{0}}\\
	&= \frac{\int_{L+\Delta L}^{U+\Delta U}M\cdot \rho(x)dx-\int_{L}^{U}M\cdot \rho(x)dx}{d\lambda_{0}}\\
	&= M\cdot \frac{\int_{U}^{U+\Delta U}\rho(x)dx-\int_{L}^{L+\Delta L}\rho(x)dx}{d\lambda_{0}}\\
	&\approx M\cdot \frac{p(U)\cdot \Delta U-p(L)\cdot \Delta L}{d\lambda_{0}}\\
	&\stackrel{\text{\eqs{eqs:delta-L-U-definitions}}}{=} M\cdot \frac{p(U)\cdot \Delta L \cdot e^{\gamma\cdot N_{SC}}-p(L)\cdot \Delta L}{d\lambda_{0}}\\
	&= M\cdot \frac{\Delta L}{d\lambda_{0}}\cdot \left(p(U)\cdot e^{\gamma\cdot N_{SC}}-p(L)\right)\\
	&\stackrel{\text{\eqs{eqs:delta-L-U-definitions}}}{=} \frac{M}{\gamma\cdot  C_{BH}\cdot\Phi}\cdot \left(p(U)\cdot e^{\gamma\cdot N_{SC}}-p(L)\right)\numberthis
\end{align*}
and, similarly, 
\begin{equation}
\frac{d|\mathcal{B}|}{d\lambda_{0}} \approx - M\cdot \frac{e^{\gamma\cdot N_{SC}}}{\gamma\cdot  C_{BH}\cdot\Phi}\cdot p(U)
\end{equation}
and 
\begin{align*}
& \frac{d}{d\lambda_{0}}\left[\sum_{\theta\in\mathcal{B}}\ln\left(\gamma\cdot\Phi\cdot R_{0}^{\theta} \right) \right] \\
	&= \frac{\sum_{\theta\in\mathcal{B}^{'}}\ln\left(\gamma\cdot\Phi\cdot R_{0}^{\theta} \right)-\sum_{\theta\in\mathcal{B}}\ln\left(\gamma\cdot\Phi\cdot R_{0}^{\theta} \right)}{d\lambda_{0}}\\
	&= \frac{-\int_{U}^{U+\Delta U}\ln(\gamma\cdot\Phi\cdot x)\cdot M\cdot \rho(x)dx}{d\lambda_{0}}\\
	&\approx -M\cdot \frac{\ln(\gamma\cdot\Phi\cdot U)\cdot p(U)\cdot \Delta U}{d\lambda_{0}}\\
	&\stackrel{\text{\eqs{eqs:delta-L-U-definitions}}}{=} - M\cdot \frac{e^{\gamma\cdot N_{SC}}}{\gamma\cdot  C_{BH}\cdot\Phi}\cdot \ln(\gamma\cdot\Phi\cdot U)\cdot p(U)\\
	&\stackrel{\text{\eqs{eqs:L-U}}}{=} - M\cdot \frac{e^{\gamma\cdot N_{SC}}}{\gamma\cdot  C_{BH}\cdot\Phi}\cdot p(U) \cdot \left(\gamma\cdot N_{SC}+\left(1+\frac{\lambda_{0}}{C_{BH}}\right)\right)\numberthis
\end{align*}
and, similarly, 
\begin{align*}
 \frac{d}{d\lambda_{0}}&\left[\sum_{\theta\in\mathcal{C}}\ln\left(\gamma\cdot\Phi\cdot R_{0}^{\theta} \right) \right] \\
	&\approx M\cdot \frac{1}{\gamma\cdot  C_{BH}\cdot\Phi}\cdot p(L)\cdot \ln\left(1+\frac{\lambda_{0}}{C_{BH}}\right)\numberthis
\end{align*}
\end{subequations}

Substituting \eqs{eqs:derivatives-for-sum-1} in \eq{eq:derivative-sum-1}, gives 
\begin{equation}\label{eq:derivative-sum-1-final}
 \frac{d}{d\lambda_{0}}\left[\sum_{\theta\in\mathcal{M}}\ln\left(1+\frac{\lambda_{0}-\lambda_{\theta}+\mu_{\theta}}{\Phi_{1}} \right ) \right ] = |\mathcal{A}| \cdot \frac{1}{\Phi_{1}}\cdot \frac{1}{1+\frac{\lambda_{0}}{\Phi_{1}}}
\end{equation}

Regarding the second derivative term in \eq{eq:derivative-cost-lambda}, we proceed as following
\begin{align*}\label{eq:derivative-sum-2}
\frac{d}{d\lambda_{0}}&\left[\sum_{\theta\in\mathcal{M}}\left(\lambda_{0}-\lambda_{\theta}+\mu_{\theta}\right) \right ] \\
\stackrel{\text{\eqs{eqs:langragian-parameters}}}{=} 
	&\frac{d}{d\lambda_{0}}\left[\sum_{\theta\in\mathcal{A}}\lambda_{0}\right]+\frac{d}{d\lambda_{0}}\left[\sum_{\theta\in\mathcal{B}}(\lambda_{0}+\mu_{\theta})\right]+\frac{d}{d\lambda_{0}}\left[\sum_{\theta\in\mathcal{C}}(\lambda_{0}-\lambda_{\theta})\right]\\ 
=
	&\frac{d}{d\lambda_{0}}\left[\lambda_{0}\cdot |\mathcal{A}|\right]\\
	&+\frac{d}{d\lambda_{0}}\left[\sum_{\theta\in\mathcal{B}}\left(-C_{BH}+\gamma\cdot C_{BH}\cdot\Phi\cdot e^{-\gamma\cdot N_{SC}} \cdot R_{0}^{\theta}\right)\right]\\
	&+\frac{d}{d\lambda_{0}}\left[\sum_{\theta\in\mathcal{C}}\left(-C_{BH}+\gamma\cdot C_{BH}\cdot\Phi\cdot R_{0}^{\theta}\right)\right]\\
=
	&|\mathcal{A}|+ \lambda_{0}\cdot \frac{d|\mathcal{A}|}{d\lambda_{0}}\\
	&-C_{BH}\cdot \frac{d|\mathcal{B}|}{d\lambda_{0}}+\gamma\cdot C_{BH}\cdot\Phi\cdot e^{-\gamma\cdot N_{SC}} \cdot \frac{d}{d\lambda_{0}}\left[\sum_{\theta\in\mathcal{B}}R_{0}^{\theta}\right]\\
	&-C_{BH}\cdot \frac{d|\mathcal{C}|}{d\lambda_{0}}+\gamma\cdot C_{BH}\cdot\Phi\cdot \frac{d}{d\lambda_{0}}\left[\sum_{\theta\in\mathcal{C}}R_{0}^{\theta}\right]\numberthis
\end{align*}
Similarly as before, we get
\begin{subequations}\label{eqs:derivatives-for-sum-2}
\begin{equation}
 \frac{d|\mathcal{C}|}{d\lambda_{0}} \approx M\cdot \frac{1}{\gamma\cdot  C_{BH}\cdot\Phi}\cdot p(L)
\end{equation}
and 
\begin{align*}
 \frac{d}{d\lambda_{0}}&\left[\sum_{\theta\in\mathcal{B}}R_{0}^{\theta}\right] = \frac{-\int_{U}^{U+\Delta U}x\cdot M\cdot \rho(x)dx}{d\lambda_{0}}\\
	&\approx -M\cdot \frac{U\cdot p(U)\cdot \Delta U}{d\lambda_{0}}\\
	&\stackrel{\text{\eqs{eqs:delta-L-U-definitions}}}{=} -M\cdot \frac{\Delta L}{d\lambda_{0}}\cdot L\cdot p(U)\cdot e^{2\cdot \gamma\cdot N_{SC}}\\
	&\stackrel{\text{\eqs{eqs:L-U}}}{=} -M\cdot \frac{e^{\gamma\cdot N_{SC}}}{\gamma\cdot  C_{BH}\cdot\Phi}\cdot \frac{1}{\gamma\cdot\Phi} \cdot \left(1+\frac{\lambda_{0}}{C_{BH}}\right)p(U)\numberthis
\end{align*}
and, similarly, 
\begin{align*}
 \frac{d}{d\lambda_{0}}\left[\sum_{\theta\in\mathcal{C}}R_{0}^{\theta}\right] \approx -M\cdot \frac{1}{\gamma\cdot  C_{BH}\cdot\Phi}\cdot \frac{1}{\gamma\cdot\Phi} \cdot \left(1+\frac{\lambda_{0}}{C_{BH}}\right)p(L)\numberthis
\end{align*}
\end{subequations}

Substituting \eqs{eqs:derivatives-for-sum-2} in \eq{eq:derivative-sum-2}, gives 
\begin{equation}\label{eq:derivative-sum-2-final}
 \frac{d}{d\lambda_{0}} \left[\sum_{\theta\in\mathcal{M}}\left( \lambda_{0}-\lambda_{\theta}+\mu_{\theta}\right )\right] = |\mathcal{A}|
\end{equation}

Finally, substituting the expressions of \eq{eq:derivative-sum-1-final} and \eq{eq:derivative-sum-2-final} in \eq{eq:derivative-cost-lambda}, proves the Lemma.
\end{proof}

\end{document}